\newtheorem{theorem}{Theorem}
\newtheorem{lemma}[theorem]{Lemma}
\newtheorem{definition}{Definition}
\newcommand{\ignore}[1]{{}}
\newcommand{\argmax}{\operatornamewithlimits{argmax}}
\begin{document}

\title{A Hybrid Pricing Framework for TV \\White Space Database}

\author{Xiaojun~Feng,~\IEEEmembership{Student~Member,~IEEE,}
        Qian~Zhang,~\IEEEmembership{Fellow,~IEEE,}
        Jin~Zhang,~\IEEEmembership{Member,~IEEE,}
\thanks{X. Feng and Q. Zhang are with the Department of Computer Science and Engineering, Hong Kong University of Science and Technology, Hong Kong.  e-mail: \{xfeng, qianzh\}@cse.ust.hk.}
\thanks{J. Zhang is with the HKUST Fok Ying Tung Research Institute, Hong Kong. e-mail: jinzh@ust.hk}}

\maketitle
\begin{abstract}
According to the recent rulings of the Federal Communications Commission (FCC), TV white spaces (TVWS) can now be accessed by secondary users (SUs) after a list of vacant TV channels is obtained via a geo-location database. Proper business models are therefore essential for database operators to manage geo-location databases.
Database access can be simultaneously priced under two different schemes: the registration scheme and the service plan scheme.
In the registration scheme, the database reserves part of the TV bandwidth for registered White Space Devices (WSDs). In the service plan scheme, the WSDs are charged according to their queries.

In this paper, we investigate the business model for the TVWS database under a hybrid pricing scheme. We consider the scenario where a database operator employs both the registration scheme and the service plan scheme to serve the SUs. The SUs' choices of different pricing schemes are modeled as a non-cooperative game and we derive distributed algorithms to achieve Nash Equilibrium (NE). Considering the NE of the SUs, the database operator optimally determines pricing parameters for both pricing schemes in terms of bandwidth reservation, registration fee and query plans.
\end{abstract}

\begin{IEEEkeywords}
TV White Space, Geo-location Database, Pricing, Game Theory, Contract Theory.
\end{IEEEkeywords}

\section{Introduction}

Recently, the FCC has released the TVWS for secondary access \cite{FCCrule10}\cite{FCCrule12} under a database-assisted architecture, where there are several geo-location databases providing spectrum availability of the TV channels. These databases will be managed by database operators (DOs) approved by the FCC. To manage the operation costs, proper business models are essential for DOs.

The FCC allows DOs to determine their own pricing schemes \cite{FCCrule10} and there are two different ways in which SUs can assess the TVWS with the help of a database. SUs can register their WSDs in the database in a soft-licence style \cite{crwoncom12}. Part of the available TV spectrum is then reserved for the registered SUs \cite{ReserveFCC}\cite{ReserveOfcom}. Unregistered SUs can also access TVWS in a purely secondary manner. For instance, an SU can first connect to the database and upload WSD information such as location and transmission power and then obtain a list of available channels from the database.

As a result, two different pricing schemes can be employed, one for registered and the other for unregistered SUs, respectively. Registered SUs pay a registration fee to DOs and access the reserved bandwidth exclusively. This pricing scheme can be referred to as the \emph{registration scheme}. Unregistered SUs query the database only when they are in need of TV spectrum. DOs charge them according to the number of database queries they make. This pricing scheme is referred to as the \emph{service plan scheme}. The co-existence of multiple pricing channels allows DOs to better manage their costs and provides different service qualities to different types of SUs. For example, the registration scheme can be adopted by SUs providing rural broadband or smart metering services, since the reserved bandwidth may suffer from less severe interference. On the other hand, the service plan scheme suits the temporary utilization of TVWS such as home networking.

To harvest the advantages of both pricing schemes and maximize the profit of DOs, two challenges need to be addressed. First, how should DOs determine pricing parameters for each scheme? With limited available TV bandwidth, DOs need to decide how much to allocate to each pricing channel. Also, the registration fee in the registration scheme and the price for a certain amount of queries should be determined. Second, how should SUs choose between the two schemes? Both schemes have their pros and cons for different types of SUs. The two challenges are coupled together. The decisions of SUs on which pricing scheme to choose can affect the profit obtained by DOs while the pricing parameters designed by DOs ensure SUs have different preferences for either scheme.

In this paper, we focus on DO's hybrid pricing scheme design considering both the registration and the service plan scheme. To the best of our knowledge, there are no existing works on geo-location database considering a hybrid pricing model \cite{database_dyspan08}-\cite{database_icc12}.
We consider one DO and multiple types of SUs. The SUs can strategically choose between the two pricing schemes. Unlike many existing works consider no SU strategies when there exist multiple pricing schemes, in this paper, we assume users can have their own choices other than being directly classified into either pricing scheme. We argue that especially for a new service like database-based networking, users will consider seriously of the benefits from each scheme and other players' responses.

In this paper, we consider a two-stage pricing framework. At Stage I, the DO announces the amount of bandwidth to be reserved and the registration fee for the registered SUs and then the SUs choose whether to register or not. At Stage II, the DO announces a set of service plans for the unregistered SUs to choose from.

For the SUs, they decide which pricing scheme to choose given the announced pricing schemes. If the service plan scheme is adopted, SUs should further decide one particular plan to buy. In this paper, we consider two different SU scenarios. In the non-strategic case, SUs have fixed their pricing scheme choices. In the strategic case, SUs can compete with each other in choosing either pricing schemes.
In the later case, the competition among the SUs is modeled with the non-cooperative game theory since the choices of the other SUs can affect an SU's utility owing to the sharing of TVWS.

For the DO, the problem is to optimally allocate bandwidth and design pricing parameters based on estimated actions of the SUs. In this paper, we consider the DO has either complete or incomplete information of the SUs. By complete information, we mean that the DO knows the exact \emph{type} of each SU. The type of SU relates with its channel quality, valuation for the spectrum et. al. In the case of incomplete information, the DO knows only a distribution of the types. We model the optimal service plan design problem with contract theory. Different service plans are considered to be different contract items and an optimal contract is determined based on the knowledge of the SU types. To optimally choose pricing parameters, one challenge to tackle is how the DO estimate the possible actions of the SUs, especially for the strategic case. There may be multiple possible equilibriums existed in the game. We solve this challenge by exploring the nature of our problem and propose computationally feasible algorithm to estimate the outcome of the game.

The major contributions of this paper are summarized as follows:
\begin{itemize}
  \item We propose a hybrid pricing scheme for the DO considering the heterogeneity of SUs' types. As far as we know, it is the first work considering hybrid pricing schemes for TVWS database.
  \item We model the competition among the strategic SUs as a non-cooperative game. We prove the existence of the Nash Equilibrium (NE) under both the complete and incomplete information cases. By exploring the nature of the problem, we design computational feasible distributed algorithms for the SUs to achieve an NE with bounded iterations.
  \item We propose algorithms for the DO to optimally decide pricing parameters. We formulate the service plan design with contract theory and derive optimal contract items under the complete information and sub-optimal items for and incomplete information scenario.
\end{itemize}

The rest of this paper is organized as follows. In Section \ref{sec:model}, we describe the pricing framework and detailed system model. Problem formulation is presented in Section \ref{sec:formulation}. In Section \ref{sec:non-strategic_complete} we study the optimal pricing solution for non-strategic SUs under the complete information scenario as a baseline case. Then in Section \ref{sec:strategic_complete} and Section \ref{sec:strategic_incomplete}, we consider the SUs to be strategic players, under complete and incomplete information scenario, respectively. Numerical results are given in Section \ref{sec:simulation}. Related works are further reviewed in Section \ref{sec:related_works}. Finally, Section \ref{sec:conclusion} concludes the paper.

\section{System Model and Pricing Framework}
\label{sec:model}

In this section, after an overview of the system parameters, we introduce the big picture of the proposed pricing framework. Then we further detail the model for the DO and the SUs, respectively.
Key notations are summarized in Table \ref{tab:Notations}.

\subsection{System Parameters}
\label{subsec:parameter}

We consider the scenario with one DO and $N$ SUs denoted as $\mathcal{N}=\{u_1, u_2, \cdots, u_N\}$. We assume $N$ is public information available to the DO and the SUs. For the ease of analysis, we assume all the SUs are in the same contention domain \cite{contention_domain}. FCC requires the SUs to periodically access the database \cite{FCCrule10}. In this paper, we consider a time duration of $M$ periods. In each period, before accessing the TV channels, the SUs should connect to the database to obtain a list of available channels. We assume the total available TV channels in each period have an expected bandwidth of $B$ and the bandwidth of a channel is $b_0$. The number of available channels is then $B/b_0$.

\begin{table}[tp]
\renewcommand{\arraystretch}{1.0}
\caption{Key notations in this paper.}
\label{tab:Notations}
\centering
\begin{tabular}{|c|p{2.8in}|}
  \hline
  $N$ & The number of SUs\\
  \hline
  $M$ & The number of time periods considered\\
  \hline
  $u_i$ & The secondary user with index $i$\\
  \hline
  $a_i$ & The pricing scheme selected by $u_i$, 0: the registration scheme; 1: the service plan scheme\\
  \hline
  $\theta_i$ & The $type$ of $u_i$\\
  \hline
  $B$ & The expected bandwidth of available TVWS\\
  \hline
  $B_R$ & The expected bandwidth reserved for registered SUs\\
  \hline
  $r$ & The registration fee\\
  \hline
  $q_i$ & The number of database queries in a query plan $(q_i, p_i)$\\
  \hline
  $p_i$ & The wholesale price in the query plan $(q_i, p_i)$\\
  \hline
  $\mu_0$ & The number of the registered SUs\\
  \hline
  $\mu_1$ & The number of the unregistered SUs\\
  \hline
  $\phi_0(b)$ & The cost to reserve $b$ bandwidth\\
  \hline
  $\phi_1(q)$ & The maintenance cost for $q$ database queries\\
  \hline
  $T$ & The number of different SUs' types\\
  \hline
  $\theta^i$ & The $i$'th smallest user type\\
  \hline
  $\beta^i$ & The percentage of SUs with type $\theta^i$\\
  \hline
  $\gamma^i$ & The percentage of SUs with type $\theta^i$ with registration scheme\\
  \hline
\end{tabular}
\end{table}

\subsection{Pricing Framework}
\label{subsec:framework}

The DO offers two different pricing schemes, the registration scheme and the service plan scheme, simultaneously.

In the registration scheme, the SUs are registered with a uniform registration fee $r$. The DO reserves a fraction of $B_R/B$ ($0\leq B_R \leq B$) of the total TV bandwidth for them in each period. The expected number of reserved channels is $B_R/b_0$. The DO further divides the reserved channels equally to serve each registered SU. To make the problem solvable, in this paper, we consider only uniform registration fee. In a more flexible setting, the registered SUs can pay different amount of registration fees and enjoy different shares of the reserved bandwidth.
We also assume an SU can refuse to pay the registration fee to abort the deal.

In the service plan scheme, the DO offers the SUs several query plans, which are a set of $K+1$ query-price combinations denoted as $\mathcal{P}=\{(q_i, p_i), i=0,1,\cdots, K\}$. During the $M$ periods, the SUs with chosen plan $(q_i, p_i)$ can access the database $q_i$ times with a wholesale price $p_i$.
Note that $q_i$s are integers and $0\leq q_i\leq M$.
We assume there is a plan $(q_0=0, p_0=0)$ for SUs choosing neither scheme. The unregistered SUs shared the unreserved band in a time division manner. For example, these SUs can leverage CSMA in the MAC layer for channel access as today's Wi-Fi.

The pricing procedure is conducted at the beginning of the $M$ periods and can be viewed as two stages. At stage I, the DO announces parameters $B_R$ and $r$ for the registration scheme. Then the SUs decide whether to register. At stage II, the DO first announces $\mathcal{P}$ for all unregistered SUs and then these SUs specify which particular query plan to buy. An SU can always choose $(q_0, p_0)$ to abort the deal.

The procedure of pricing and interactions between the DO and the SUs are summarized in Fig. \ref{fig:procedure}.

\begin{figure}[tp]
\centering
  \includegraphics[width=3.2in]{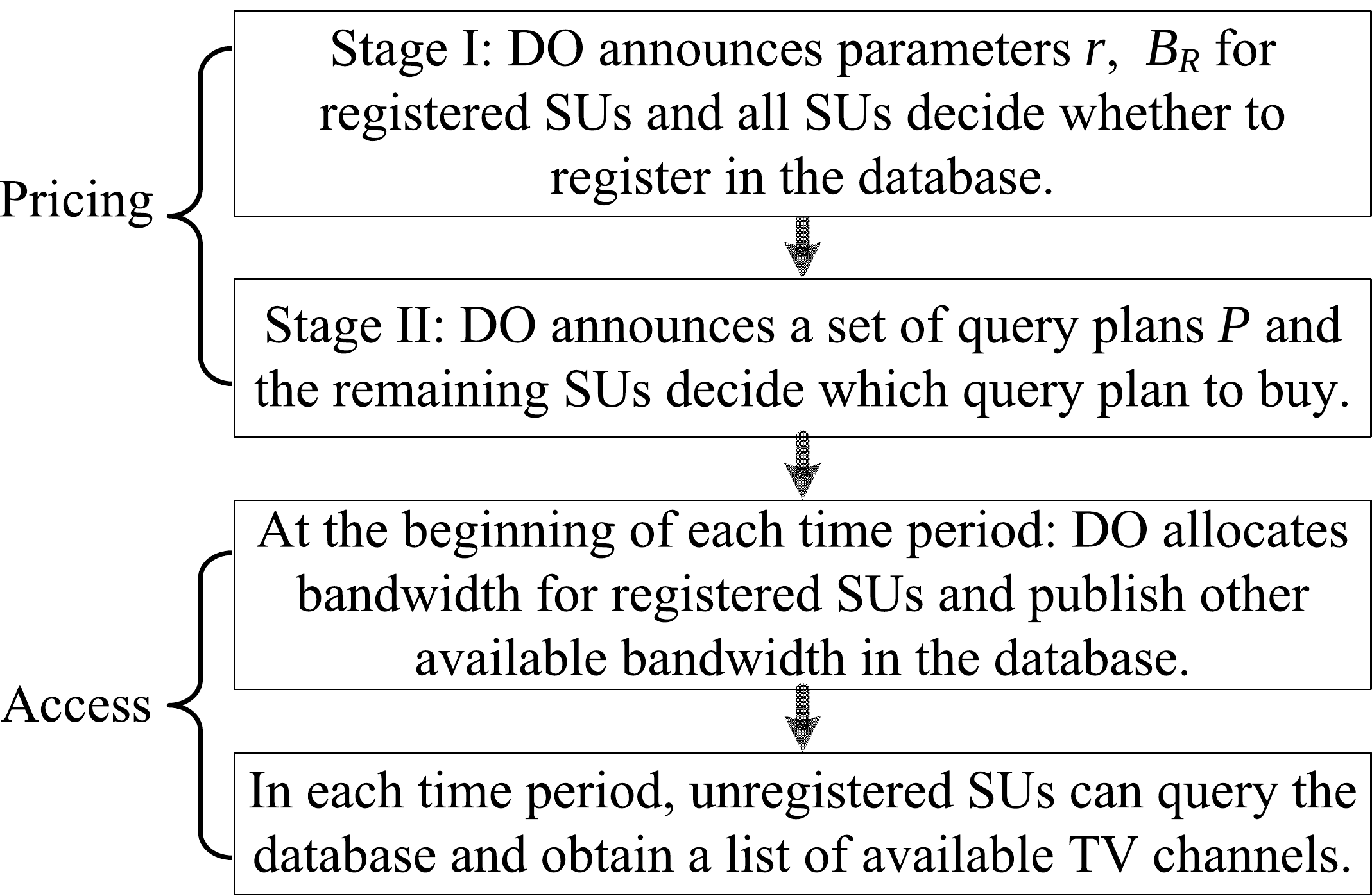}\\
  \caption{The procedure of pricing and access of TVWS.}
  \label{fig:procedure}
  \vspace{-0.0cm}
\end{figure}

\subsection{DO Model}
\label{subsec:DO}

We assume the DO has two different sources of database maintenance cost. On one hand, to reserve bandwidth for the registered SUs, DO needs to pay a bandwidth reservation fee to the regulators. We assume the cost to reserve $b$ MHz bandwidth is in the form $\phi_0(b)=\epsilon_0\cdot b^{\alpha}$, $\alpha\geq 1$. The reservation cost is a convex function on $b$. The key reason is that spectrum bandwidth is a limited resource. Reservation must be approved by the regulator and there can be multiple databases trying to reserve bandwidth. When the total demand is higher, the equilibrium price can be also higher \cite{reservation_price}. Furthermore, to avoid unnecessary reservations, the regulator may consider charge higher for more reserved bandwidth \cite{ReserveOfcom}. On the other hand, there is also the cost for the DO to calculate the available bandwidth and respond to the queries of unregistered SUs. Here we adopt a linear cost function on the query number. We assume for $q$ queries, the DO has to pay a marginal maintenance cost $\phi_1(q)=\epsilon_1\cdot q$.

We define the \emph{utility} of the DO, denoted by $U_{DO}$ as the difference between the payment received from all the SUs and the total cost. For convenience, we assume the number of SUs choosing query plan $i$ is $N_i, (i=0, 1, \cdots, K)$, the number of registered SUs is $\mu_0$ and the number of unregistered users is $\mu_1$ then
\begin{eqnarray}
\label{eqn:u_do}
U_{DO}&=&\mu_0\cdot r+\sum_{i=0}^K N_i\cdot p_i-\phi_0(B_R)-\phi_1(\sum_{i=0}^K N_i\cdot q_i).
\end{eqnarray}

\subsection{SU Model}
\label{subsec:SU}

According to the pricing framework, the SUs choosing the same pricing scheme share the spectrum evenly. For $u_i$, if there are $X$ SUs sharing the same bandwidth of $W$, its capacity is given by the Shannon-Hartley theorem as:
$
C_i(X, W)=\frac{1}{X}\cdot W\log_2\left(1+\frac{S_i}{n_0}\right)
$, where $S_i$ is the average received signal power over bandwidth $W$ at $u_i$'s receiver, $n_0$ is the average noise and interference power over the bandwidth. $S_i$ relates to the transmission power of $u_i$ and the path loss between $u_i$'s transmitter and receiver. Without loss of generality, we assume $n_0$ is identical for all SUs.

We assume SUs have linear valuations for any expected capacity they achieve. We define the valuation function of $u_i$ as:
$
V_i(X, W)=w_iC_i(X, W)
$, where $w_i$ is the valuation parameter of $u_i$. $w_i>0$ and its physical meaning is that the valuation of the unit channel capacity contributes to the overall benefit. $w_i$ relates to the personal information of $u_i$ such as the possible transmission duration in each period.

We define the \emph{utility} of $u_i$, denoted as $U_i$, as the difference between its valuation of the guaranteed capacity and the price charged by the DO. If $u_i$ chooses the registration scheme, it can occupy part of the TV bandwidth by itself. Its utility is
\begin{equation*}
U_i=V_i(1, \frac{B_R}{\mu_0})-r=w_i\cdot \frac{B_R}{\mu_0}\log_2\left(1+\frac{S_i}{n_0}\right)-r.
\end{equation*}
If $u_i$ chooses the service plan scheme, it should share the bandwidth with other unregistered SUs. There are at most $\mu_1$ unregistered SUs sharing, any one of them is guaranteed with a valuation of $V_i(\mu_1, B-B_R)$ in one access.  Moreover, $u_i$ can only access at most $q_j$ periods after selecting query plan $(q_j, p_j)$. Therefore, its utility is
\begin{eqnarray*}
U_i&=&V_i(\mu_1, B-B_R)\cdot\frac{q_j}{M}-p_j \\
&=&w_i\cdot \frac{B-B_R}{\mu_1}\log_2\left(1+\frac{S_i}{n_0}\right)\cdot \frac{q_j}{M}-p_j.
\end{eqnarray*}

From the model, we can see that $S_i$, $w_i$ are personal parameters for $u_i$. We further define $\theta_i=w_i\log_2\left(1+\frac{S_i}{n_0}\right)$ as the \emph{type} of $u_i$ and $v(q)=\frac{B-B_R}{\mu_1}\frac{1}{M}q$ as the \emph{valuation} of $q$ queries. The physical meaning of $\theta_i$ is the capacity of $u_i$ that can turn unit bandwidth into revenue. We use $a_i$ to denote the pricing scheme selected by $u_i$. $a_i=0$ represents the registration scheme and $a_i=1$ represents the service plan scheme. The utility function of $u_i$ can be summarized as
\begin{equation}
\label{equ:u_su}
U_i=\left\{
\begin{array}{ll}
\frac{B_R}{\mu_0}\theta_i-r, & \textrm{if $a_i=0$}  \\
\theta_iv(q_j)-p_j, & \textrm{if $a_i=1$ and the plan is $(q_j, p_j)$}
\end{array}
\right.
\end{equation}

We assume there are a total of $T$ different types and we denote the types by the set $\Theta=\{\theta^1, \theta^2, \cdots, \theta^T\}$. Without loss of generality, we assume $0<\theta^1< \theta^2< \cdots <\theta^T$.

\section{Problem Formulation}
\label{sec:formulation}

In this section, we formulate the pricing problem according to the two-stage pricing framework.

\subsection{SU Strategy}

In this paper, we consider two different cases of SUs based on their abilities to act strategically.
\begin{itemize}
  \item \emph{Non-strategic SUs}: in this case, for $u_i$, $a_i$ is fixed before the pricing procedure. SUs are categorized with pricing scheme choices. An SU will not change the choice of the pricing scheme strategically according to the choices of other users.
  \item \emph{Strategic SUs}: in this case, for $u_i$, $a_i$ is not fixed before the pricing procedure. $u_i$ can determine its pricing scheme dynamically according to the choices of other users.
\end{itemize}

Actually, we argue that both assumptions are reasonable in real world. In matured services \cite{icdcs12segmentation}, different pricing schemes have clear pros and cons. Users tend to have their preferences. In emerging services, the SUs may have no experiences or preferences.

In the case of non-strategic SUs, $a_i$ is fixed. We assume the fraction of SUs with type $\theta^i$ choosing $a_i=0$ is $\gamma^i$ ($i=1,\cdots, T$). And we assume the ratio of SUs of type $\theta^i$ among all SUs is $\beta^i$. Therefore, the total number of SUs with $a_i=0$ is $N\sum_{i=1}^T\beta^i\gamma^i$ and the total number of SUs with $a_i=1$ is $N\sum_{i=1}^T\beta^i(1-\gamma^i)$.

In the case of strategic SUs, $a_i$ is not fixed. SUs strategically choose either pricing scheme considering other SUs' strategies.
Therefore, at stage I, given $B_R$ and $r$, the SUs compete for pricing schemes. The interactions among the SUs form a \emph{Non-cooperative Database Registration Game (NDRG)} $G=(\mathcal{N}, \mathcal{S}=\{1,0\}, \{U_i\}_{u_i\in \mathcal{N}})$, where $\mathcal{N}$ is the set of players, $\mathcal{S}$ is the set of strategies, and $U_i$ is the payoff function of play $u_i$. Let $\mathbf{a}=(a_1, a_2, \cdots, a_N)$ be the strategy-tuple of all the SUs. Let $a_{-i}=(a_1, a_2, \cdots, a_{i-1}, a_{i+1}, \cdots, a_N)$ be the strategy-tuple of all the SUs expect $u_i$.
\begin{definition}[\textbf{Nash Equilibrium (NE) of NDRG}]
A strategy vector $\mathbf{a^*}=(a_1^*, a_2^*, \cdots, a_N^*)$ is an NE of the game NDRG if
$
a_i^*=\argmax_{a_i\in S}U_i(a_i, a_{-i}^*), \forall u_i \in\mathcal{N}.
$
\end{definition}

\subsection{Information Scenario}

In this paper, we also study the following two different information scenarios.
\begin{itemize}
  \item \emph{Complete information}: in this case, the DO and the SUs are perfectly informed about the type of each SU.
  \item \emph{Incomplete information}: in this case, the DO and the SUs know only the distribution of the SUs' types.
\end{itemize}

In the complete information scenario, the DO can treat each SU separately and offer a type-specific contract item. However, in the incomplete information scenario, the DO cannot observe the type of each SU, it has to offer the same set of query plans to all SUs.

To tackle the two different information scenarios, we model the design of query plans with contract theory \cite{contract_book}. The set of query plans $\mathcal{P}$ can be viewed as the set of contact items. The DO is the seller and the SUs are the buyers. The goods offered by the seller are the database queries and the wholesale price for $q_i$ queries is $p_i$. The buyers' types are their private information. According to the revelation principle \cite{contract_book}\cite{revelation}, it is sufficient to design at most $T$ contract items, one for each type of the SUs, to enable them to truthfully reveal their types. Therefore, we assume $K=T$ and an unregistered SU of type $\theta^i$ will eventually choose plan $(q_i, p_i)$.

\subsection{Backward Induction for the Two-stage Pricing Framework}

Based on the two-stage pricing procedure, the design of the optimal pricing schemes can be analyzed with backward induction. More specifically, we first show how should the DO design the query plans for the SUs at stage II, given $B_R$, $r$ and SUs' strategy-tuple $\mathbf{a}$. Then we study how the SUs choose the pricing scheme given $B_R$, $r$ announced at stage I. Finally, based on the knowledge of the possible pricing scheme selection of the SUs, the DO optimally selects $B_R$ and $r$ at stage I.

In this paper, we will first consider a baseline scenario where SUs are non-strategic players and the DO processes complete information of all the SUs. Then, we extend our analysis to strategic SUs under both complete and incomplete information scenarios. By comparing the first and the second scenario, we show the impact of SU strategy on the pricing design. By comparing the second and the third scenario, we show the impact of information completeness. Due to the page limit and similarity in analysis, we omit the case of non-strategic SUs under incomplete information.

\section{Pricing Solution for Non-strategic SUs in Complete Information Scenario}
\label{sec:non-strategic_complete}

In this section, we consider a baseline case where the SUs are non-strategic players and the DO has complete information of all the SUs.

\subsection{SU Behavior Given Pricing Parameters}

Since the SUs fix their choices of pricing schemes, we can analyze the registered and unregistered SUs separately.

At stage I, some SUs with $a_i=0$ may refuse to pay the fee if it is too high for them. Given $B_R$ and $r$, $u_i$ pays the registration fee if and only if the following condition holds:
$
\frac{B_R}{\mu_0^*}\theta_i-r\geq 0,
$
where $\mu_0^*=N\sum_{i=1}^T\beta^i\gamma^i$ is the prior knowledge of the number of registered SUs. Non-strategic SUs decide whether to pay the fee only based on the prior knowledge.
The actual number of registered SUs can then be calculated as
\begin{equation}
\label{eqn:mu0}
\mu_0=\sum_{i\in\left\{i|\frac{B_R}{\mu_0^*}\theta^i-r\geq 0\right\}}N\beta^i\gamma^i.
\end{equation}

At stage II, there are $\mu_1=N\sum_{i=1}^T\beta^i(1-\gamma^i)$ unregistered SUs. Since any $u_i$ can be guaranteed with non-negative utility by choosing $(q_0, p_0)$, to enable $u_i$ to select query plan $(q_i, p_i)$, $q_i, p_i$ should satisfy $\theta_iv(q_i)-p_i\geq 0, \ \forall 1\leq i\leq N$.
The constraints are usually referred to as the \emph{Individual Rationality} (IR).
The following lemma says that all the IR constraints are bind under complete information.
\begin{lemma}
\label{lemma:IR_complete}
At stage II, if the DO has complete information of $\theta_i$ for each $u_i$, the optimal query plan $(q_i, p_i)$ offered to $u_i$ satisfies: $\theta_iv(q_i)-p_i=0, \ 1\leq i\leq N$.
\end{lemma}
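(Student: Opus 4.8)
The plan is to argue by contradiction using a price-perturbation argument, exploiting that the DO's revenue is increasing in each $p_i$ while its cost is independent of the $p_i$'s. First I would fix the stage-II setting: $B_R$, $r$, $\mu_0$ and $\mu_1$ are already determined, so the DO's problem is to choose the query plans $\{(q_i,p_i)\}$ to maximize $U_{DO}$ in \eqref{eqn:u_do}. Under complete information the DO knows every $\theta_i$, so it can present SU $u_i$ with exactly its tailored item $(q_i,p_i)$ alongside the null option $(q_0,p_0)=(0,0)$; hence there are no cross-item incentive-compatibility constraints, and the only restriction coupling $p_i$ to $q_i$ for an unregistered $u_i$ is the IR constraint $\theta_i v(q_i)-p_i\ge 0$. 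I would also record the tie-breaking convention already implicit in the model, that an SU which is indifferent between its item and the null option accepts its item.

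The core step is then: suppose at an optimal contract there is an unregistered $u_i$ with $\theta_i v(q_i)-p_i=\delta>0$. Construct a new contract identical to the old one except that $p_i$ is replaced by $p_i+\delta$. IR for $u_i$ now holds with equality, so $u_i$ still (weakly) prefers its item; the induced allocation $\{q_i\}$, the counts $N_i$, $\mu_0$, $\mu_1$ and the reservation $B_R$ are all unchanged. Consequently both cost terms $\phi_0(B_R)$ and $\phi_1(\sum_{i} N_i q_i)$ are unchanged, while the revenue term $\sum_{i} N_i p_i$ increases by $N_i\,\delta$. Since $u_i$ is an actual unregistered user, $N_i\ge 1$, so $U_{DO}$ strictly increases, contradicting optimality. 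Hence every IR constraint binds. (If $N_i=0$ for some $i$, the item is never chosen and we may simply set it to $(0,0)$, which satisfies the stated equality since $v(0)=0$; note also $v(q)=\frac{B-B_R}{\mu_1 M}\,q\ge 0$ throughout, so raising $p_i$ toward $\theta_i v(q_i)$ is always feasible.)

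I do not expect a genuine obstacle here; the one point to handle carefully is the interaction between users. The argument must make explicit that perturbing a single $p_i$ neither loosens nor tightens any other user's constraint nor changes the DO's cost — this is exactly where it matters that, under complete information, the DO faces no pooling or screening constraints and that its costs depend on the $(q_i)$ and $B_R$ but not on the $(p_i)$. Establishing this \emph{price-separable} structure is also what makes the subsequent backward-induction steps (first optimizing each $q_i$, then $B_R$ and $r$) tractable, so I would phrase the proof to bring this structure out rather than treat the lemma as a one-off observation.
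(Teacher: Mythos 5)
Your proposal is correct and follows essentially the same route as the paper: a contradiction argument that raises $p_i$ by the slack $\theta_i v(q_i)-p_i$, leaving the allocation and all cost terms unchanged while strictly increasing the DO's revenue. The extra care you take with tie-breaking, the null option, and the price-separable structure only fleshes out details the paper's proof leaves implicit.
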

\begin{proof}
We prove by contradiction. Suppose in the optimal pricing design, the DO offers $(q_i^*, p_i^*)$ to $u_i$ at stage II to optimize $U_{DO}$ and $\theta_iv(q_i^*)-p_i^*>0$. However, when the DO increases the price $p_i^*$ by $\Delta p_i^*=\theta_iv(q_i)^*-p_i^*$ and keeps all the other pricing parameters unchanged. $U_{DO}$ can be increased by $\Delta p_i^*$, which contradicts with the optimality of $(q_i^*, p_i^*)$. Therefore, we must have $\theta_iv(q_i^*)-p_i^*=0$.
\end{proof}
According to Lemma \ref{lemma:IR_complete}, we have:
\begin{equation}
\label{eqn:IR}
\theta_iv(q_i)-p_i = 0, \quad \forall 1\leq i\leq N.
\end{equation}

\subsection{Optimal Parameter Selection of DO}

By substituting (\ref{eqn:mu0}) and (\ref{eqn:IR}) into (\ref{eqn:u_do}) we can rewrite $U_{DO}$ as
\begin{eqnarray}
  \label{eqn:UDO0}
U_{DO}(B_R, r, \mathcal{P})&=&\mu_0\cdot r + \sum_{i=0}^KN_ip_i-\phi_0(B_R)-\phi_1(\sum_{i=0}^KN_iq_i) \nonumber \\
&=&\sum_{i\in\left\{i|\frac{B_R}{\mu_0^*}\theta^i-r\geq 0\right\}}N\beta^i\gamma^i\cdot r - \epsilon_0B_R^{\alpha} \nonumber\\
&+&\sum_{i=1}^TN\beta^i\left(1-\gamma^i\right)\left(\frac{B-B_R}{\mu_1M}\theta^i-\epsilon_1\right)\cdot q_i.
\end{eqnarray}

The optimization problem for DO is
\begin{eqnarray}
\label{eqn:max_udo0}
&&\max_{B_R, r, \{q_i\}} U_{DO}(B_R, r, \mathcal{P}) \nonumber\\
&\textrm{subject to:} & 0\leq q_i\leq M (\forall 1\leq i\leq T), \quad\textrm{and}\quad 0\leq B_R\leq B, \nonumber\\
&\textrm{variables:} & \{q_i(\forall 1\leq i\leq T), B_R, r\}.
\end{eqnarray}

Problem (\ref{eqn:max_udo0}) is a non-convex optimization problem and there is no closed-form solution.
The first term $\sum_{i\in\left\{i|\frac{B_R}{\mu_0^*}\theta^i-r\geq 0\right\}}N\beta^i\gamma^i$ in Eqn. (\ref{eqn:UDO0}) is a piecewise constant function of $B_R$. The $M$ points $\frac{r}{\theta^{i}}\mu_0^*$ divide the domain of $B_R$ into $M+1$ different intervals. It is a quadratic programming problem to optimize (\ref{eqn:UDO0}) in each interval. The problem (\ref{eqn:max_udo0}) thus can be decomposed into $M+1$ quadratic programming problems, which may be NP-hard in the general form \cite{nonlinear}.

Considering the nature of our problem, the physical meaning of $B_R$ is the bandwidth reserved for registered SUs and $B_R$ is measured in channels. The possible values of $B_R/b_0$ are integers in the range $[0, B/b_0]$. Also, we observe that $U_{DO}$ has a linear relationship for all $\{q_i\}$, then $q_i$ will either be 0 or $M$.
\begin{equation}
\label{eqn:q0}
q_i=\left\{
\begin{array}{ll}
0 & \textrm{if $\frac{B-B_R}{\mu_1}\frac{1}{M}\theta^i-\epsilon_1\leq0$} \\
M & \textrm{otherwise}
\end{array}
\right.
\end{equation}
Furthermore, the range and scale of the registration price $r$ are also limited in reality. If we also restrict the range of $r$ to be $[\emph{\b{r}}, \emph{\={r}}]$ and its minimum scale to be $r_0$, the DO needs only to explore a total of $\frac{B}{B_R}\cdot\frac{\emph{\={r}}-\emph{\b{r}}}{r_0}$ combinations to find the solution.
In this paper, we will leverage this simpler but efficient two-dimensional search method to find the pricing parameters.

Therefore DO tries to solve the following optimization problem
\begin{eqnarray}
\label{eqn:max_udo1}
&&\max_{B_R, r, \{q_i\}} U_{DO}(B_R, r, \mathcal{P}) \nonumber\\
&\textrm{subject to:}& \quad (\ref{eqn:q0}) \quad\textrm{and}\quad 0\leq B_R\leq B.
\end{eqnarray}
We provide the numerical results in Section \ref{sec:simulation}.

\section{Pricing Solution for Strategic SUs in Complete Information Scenario}
\label{sec:strategic_complete}

In this section, we consider the SUs to be strategic players.

\subsection{Stage II: Optimal Contract Design under Complete Information}

Same as the case discussed in Section \ref{sec:non-strategic_complete}, when the DO processes complete information of the SUs, any SU choosing the service plan scheme will have zero utility.
The contract items ${(q_i, p_i)}$ satisfy the condition in Eqn. (\ref{eqn:IR})

\subsection{Stage I: Database Registration Game of SUs}

\subsubsection{Existence of an NE}

At stage I, SUs compete against each other in the NDRG considering the outcome at stage II given the parameters $B_R$ and $r$. The utility function of $u_i$ can be rewritten as
\begin{eqnarray}
U_i(a_i, a_{-i})=\left\{
\begin{array}{ll}
\frac{B_R}{\mu_0}\theta_i-r & \textrm{if $a_i=0$} \\
0 & \textrm{if $a_i=1$}
\end{array}
\right. \label{eqn:u_complete_info}
\end{eqnarray}

We show that the NDRG under complete information is a \emph{Unweighted Congestion Game} \cite{congestion_game}.
\begin{definition}[\textbf{Unweighted Congestion Game}]
A non-cooperative game satisfying the following condition is referred to as the unweighted congestion game:
\begin{itemize}
  \item The players share a common set of strategies.
  \item The payoff the $i^{th}$ player receives for playing the $j^{th}$ strategy is a monotonically non-increasing function of the total number $\mu_j$ of players playing the $j$th strategy.
\end{itemize}
\end{definition}

\begin{lemma}
\label{lemma:NDRG_is_CG}
NDRG under complete information is an unweighted congestion game.
\end{lemma}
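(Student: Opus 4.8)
The plan is to verify directly that the NDRG under complete information meets both defining properties of an unweighted congestion game, reading off everything from the closed-form utility (\ref{eqn:u_complete_info}). No optimization or fixed-point argument is needed here; this is a structural observation.

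First I would dispatch the common-strategy-set property: by construction $G=(\mathcal{N},\mathcal{S}=\{1,0\},\{U_i\}_{u_i\in\mathcal{N}})$, so every player $u_i$ chooses from the same two-element set $\mathcal{S}=\{0,1\}$, and this condition holds trivially. Then I would check the monotonicity property, namely that for each strategy $j\in\{0,1\}$ the payoff a player gets from playing $j$ depends on the profile $\mathbf{a}$ only through the number $\mu_j$ of players choosing $j$ and is a monotonically non-increasing function of $\mu_j$. Fix a player $u_i$ playing strategy $j$; then $\mu_j\ge 1$, so the relevant expression is well defined. For $j=0$, (\ref{eqn:u_complete_info}) gives $U_i=\frac{B_R}{\mu_0}\theta_i-r$, where $B_R\ge 0$, $\theta_i>0$ and $r$ are constants fixed at Stage~I; hence this is a non-increasing function of $\mu_0$ (strictly decreasing when $B_R>0$) and has no other dependence on who plays what. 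For $j=1$, (\ref{eqn:u_complete_info}) gives $U_i=0$, a constant, which is in particular monotonically non-increasing in $\mu_1$. Both strategies therefore satisfy the requirement, and the lemma follows.

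The only two points that need a word of care — and neither is a real obstacle — are: (i) the congestion-game definition allows the per-strategy payoff to depend on the player index $i$ (here through the type $\theta_i$) as long as it is monotone in $\mu_j$, so the type-dependence of the Stage~I payoff is harmless; and (ii) the zero payoff of the service-plan strategy, which is itself forced by Lemma~\ref{lemma:IR_complete} and (\ref{eqn:IR}) binding all IR constraints, counts as weakly non-increasing, so no strict monotonicity is demanded. I expect the proof to be short and essentially a direct appeal to the form of (\ref{eqn:u_complete_info}).
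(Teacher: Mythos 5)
Your proof is correct and follows essentially the same route as the paper's: verify the common strategy set $\mathcal{S}=\{0,1\}$, note that the strategy-$0$ payoff $\frac{B_R}{\mu_0}\theta_i-r$ is non-increasing in $\mu_0$, and that the strategy-$1$ payoff is the constant $0$, hence trivially non-increasing in $\mu_1$. Your added remarks on player-specific (type-dependent) payoffs being admissible and on the zero payoff stemming from the binding IR constraints of Lemma~\ref{lemma:IR_complete} are accurate but not needed beyond what the paper states.
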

\begin{proof}
It is easy to see that all the players in NSDG share a common set of strategies $\mathcal{S}$. In the scenario of complete information, the payoff function of player $u_i$ is defined as $U_i$ in (\ref{eqn:u_complete_info}). For strategy 1, $U_i$ is constant. For strategy 0, $U_i$ is a monotonically non-increasing function with $\mu_0$.
\end{proof}

The nice properties of the unweighted congestion game are summarized as lemma \cite{congestion_game}.
\begin{lemma}
\label{lemma:property_of_CG}
Every unweighted congestion game $\Gamma$ possesses an NE in pure strategies. Given an arbitrary strategy-tuple $\mathbf{a}$ in $\Gamma$, there exists a best-reply improvement path \cite{congestion_game} with $L$ steps and $L\leq s{{n+1}\choose 2}$, where $s$ is the number of all strategies and $n$ is the number of players in $\Gamma$.
\end{lemma}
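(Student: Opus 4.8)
The plan is to establish both assertions by the congestion-game potential method: Rosenthal's potential-function argument for the existence of a pure NE, together with a careful accounting of improvement steps for the length bound, as in Milchtaich's treatment of player-specific payoffs — which is the regime that actually applies here, since a player's payoff on a strategy depends on its identity (e.g.\ on $\theta_i$ for strategy $0$ in the NDRG) and not only on the load $\mu_j$.

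First, existence. If every player had the same non-increasing payoff function $f_j(\cdot)$ on each strategy $j$, I would exhibit Rosenthal's potential
\[
\Phi(\mathbf a)=\sum_{j\in\mathcal S}\ \sum_{k=1}^{\mu_j(\mathbf a)} f_j(k),
\]
and verify that when a single player switches from strategy $j$ to strategy $j'$ the change in $\Phi$ equals exactly the change in that player's payoff; since $\mathcal S$ and $\mathcal N$ are finite, $\Phi$ attains a maximum, and any maximizer is a pure NE. To cover the genuinely player-specific case I would instead induct on the number of players $n$: take, by the inductive hypothesis, a pure NE $\mathbf a'$ of the game restricted to $u_1,\dots,u_{n-1}$, insert $u_n$ playing a best reply, and then repeatedly let some unsatisfied player move to a current best reply, always choosing the deviator so that the strategy it moves onto ends up as lightly loaded as possible. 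One then shows that a lexicographically defined quantity built from the sorted vector of loads on the currently worst-paying strategies changes monotonically along this path, so the path terminates, and its terminal profile is a pure NE.

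Second, the length bound. Running the same construction from an arbitrary initial profile $\mathbf a$, I would bound, for each player $u_i$, how many times it changes strategy along the best-reply improvement path: once $u_i$ plays a best reply it stays put until the load changes on its own strategy or on some strategy it strictly prefers, and every such change is triggered by a move of another player. A bookkeeping argument — each player "disturbs" the others at most $O(n)$ times, and between disturbances re-optimizes over at most $s$ strategies — then yields a bound of the stated form $L\le s\binom{n+1}{2}$, where $\binom{n+1}{2}=\frac{n(n+1)}{2}$; pinning the constant down to exactly $\binom{n+1}{2}$ is where the order in which deviations are scheduled must be chosen with care.

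I expect the main obstacle to be precisely the player-specific structure. For an ordinary congestion game the whole statement drops out of $\Phi$ in two lines — even the crude bound "number of load profiles minus one'' is automatic, since $\Phi$ strictly increases along any improvement path — but with player-specific payoffs there is no exact potential, arbitrary improvement paths may cycle, and one must both (i) replace the potential by the inductive/lexicographic argument to obtain existence at all, and (ii) schedule the deviations in the best-reply path carefully enough to reach the stated polynomial length bound rather than merely a finite (or exponential) one. As this is a known result, in the paper I would state it with the reference \cite{congestion_game} and only sketch the potential-function idea above.
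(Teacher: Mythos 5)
The paper does not prove this lemma at all --- it is quoted as a known result of Milchtaich \cite{congestion_game} --- so your plan to cite that reference and only sketch its argument matches the paper's treatment exactly. Your sketch is also faithful to the cited source: you correctly observe that the payoffs here are player-specific (the Rosenthal potential is only an aside for the symmetric case, and arbitrary improvement paths may cycle), and that both existence of a pure NE and the $s\binom{n+1}{2}$ bound on a best-reply improvement path come from Milchtaich's inductive construction with carefully scheduled best replies.
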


\subsubsection{Distributed Database Registration Algorithm}

Lemma \ref{lemma:property_of_CG} confirms the existence of an NE in NDRG under complete information. Starting from any strategy-tuple, an NE can be achieved with $L\leq s{{n+1}\choose 2}$ steps. In our scenario, $s=2$ and $n=N$. $L$ is therefore bounded by $N(N+1)$. Observe that when selecting the registration scheme, SUs of higher types have higher utility. When more SUs choosing the registration scheme, the utility of SUs of lower types becomes non-positive sooner. Therefore, the best strategy for an SU is to make sure that no more SUs with higher types will choose the registration scheme. Otherwise, if an SU chooses registration and more SUs with higher types also choose, the SU will be under the risk of non-positive utility. We can then design a distributed algorithm for SUs to achieve an NE without any improvement steps. The key idea is to allow the SUs to choose the registration scheme in a descending order of their types until no more SUs have incentives to join.

We assume that each SU is assigned a unique ID by the DO when connecting with the database and it knows the number of existing registered SUs at stage I, by checking the database. Without loss of generality, we assume the ID of $u_i$ is $i$. In the complete information scenario, the SUs know the exact value of the type and the ID of each other. Therefore, all the SUs can independently calculate a threshold of user type denoted as $\theta_{i_M}$. When SUs of type higher than $\theta_{i_M}$ are registered, their utilities are positive. But when SUs with type equal to $\theta{i_M}$ also register, these SUs can only have non-positive utility. Therefore, only part of the SUs with type equal to $\theta{i_M}$ can choose registration. After the calculation, the SUs with type higher than $\theta_{i_M}$ then choose the registration scheme. And the SUs of $\theta_{i_M}$ choose to register in the order of the assigned ID. Finally the remaining SUs with even lower types choose the service plan scheme. The decisions of the SUs are made at the same time.
The procedure is summarized in Algorithm \ref{alg:registration_complete_info}.

\begin{algorithm}[tp]
\caption {Distributed Database Registration Algorithm.}
\label{alg:registration_complete_info}
\begin{algorithmic}[1]
\STATE $a_i=1, \forall 1\leq i\leq N$; $\quad i_M=T$; $\quad R=0$;
\WHILE {$B_R\theta^{i_M}>(R+N\cdot\beta^{i_M})\cdot r$}
\STATE $R=R+N\cdot\beta^{i_M}$; $\quad i_M=i_M-1$;
\STATE \textbf{if} {$i_M=0$} \textbf{then} BREAK out of WHILE;
\ENDWHILE
\STATE $a_i=0, \forall \theta_i> \theta^{i_M}$;
\FOR {$i \in \{1, \cdots, N\}$}
\IF {{$\theta_i=\theta^{i_M}$} \AND {$B_R\theta_{i}>(R+1)\cdot r$}}
\STATE $R=R+1$; $\quad a_i=0$;
\ENDIF
\ENDFOR
\end{algorithmic}
\end{algorithm}

\begin{theorem}
\label{thm:algorithm1_property}
Under complete information, Algorithm \ref{alg:registration_complete_info} is guaranteed to achieve an NE of NDRG. Its time complexity is $O(N)$.
\end{theorem}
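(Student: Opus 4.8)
The plan is to verify directly that the profile $\mathbf{a}^*$ output by Algorithm~\ref{alg:registration_complete_info} satisfies the NE conditions, and then to bound its running time; the existence part of Lemma~\ref{lemma:property_of_CG} is not needed here since we exhibit the equilibrium constructively. First I would rewrite the NE condition using the payoff form (\ref{eqn:u_complete_info}): writing $\mu_0$ for the number of SUs that play $0$ in $\mathbf{a}^*$, the profile is an NE iff (i) $B_R\theta_i \ge \mu_0 r$ for every registered $u_i$ (it has no incentive to drop to strategy $1$, which pays $0$), and (ii) $B_R\theta_i \le (\mu_0+1)r$ for every unregistered $u_i$ (a deviator would become the $(\mu_0+1)$-st registrant, against a payoff of $0$). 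Since $B_R\theta$ is strictly increasing in the type, (i)--(ii) amount to registering exactly the SUs of sufficiently high type, up to a threshold, which is precisely what the algorithm computes.

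Next I would unwind the algorithm. Let $i_M$ be its terminal value and $R_0 = N\sum_{k=i_M+1}^{T}\beta^k$ the value of $R$ when the WHILE loop exits. The loop maintains $R = N\sum_{l=k+1}^{T}\beta^l$ at the start of the iteration handling type $\theta^k$, and every iteration that executes certifies $B_R\theta^k > N\bigl(\sum_{l=k}^{T}\beta^l\bigr) r$; at exit, either $i_M=0$ with $B_R\theta^1 > Nr$, or $1\le i_M\le T$ with $B_R\theta^{i_M}\le (R_0+N\beta^{i_M})r$. In the latter case the FOR loop admits, in ID order, exactly the first $m$ SUs of type $\theta^{i_M}$, where $m$ is the largest integer with $B_R\theta^{i_M} > (R_0+m)r$; the exit inequality forces $m < N\beta^{i_M}$, so $m$ is well defined (possibly $0$) and strictly smaller than the number of type-$\theta^{i_M}$ SUs. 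Hence the registered set is $\{u_i:\theta_i>\theta^{i_M}\}$ together with the first $m$ SUs of type $\theta^{i_M}$, and $\mu_0 = R_0+m$.

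Then I would check (i) and (ii) against this description. A registered SU of type exactly $\theta^{i_M}$ is among the first $m$, so $B_R\theta^{i_M} > (R_0+m)r = \mu_0 r$. A registered SU of type $\theta^k>\theta^{i_M}$ has $\theta^k \ge \theta^{i_M+1}$, and the WHILE-loop certificate at type $\theta^{i_M+1}$ gives $B_R\theta^{i_M+1} > R_0 r$; if $m=0$ this already yields $B_R\theta^k \ge B_R\theta^{i_M+1} > \mu_0 r$, and if $m\ge 1$ we combine it with $B_R\theta^{i_M} > (R_0+m)r$ via strict monotonicity, $B_R\theta^k \ge B_R\theta^{i_M+1} > B_R\theta^{i_M} > \mu_0 r$. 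For (ii), an unregistered SU has type $\le\theta^{i_M}$, so $B_R\theta_i \le B_R\theta^{i_M} \le (R_0+m+1)r = (\mu_0+1)r$, the last inequality being exactly the FOR-loop test that failed for the $(m+1)$-st type-$\theta^{i_M}$ SU (or its first test, when $m=0$). The degenerate cases are subsumed: $i_M=T$ gives $R_0=0$ and an empty WHILE loop, while $i_M=0$ forces every SU to register and is an NE because $B_R\theta_i \ge B_R\theta^1 > Nr = \mu_0 r$.

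For the complexity, the WHILE loop runs at most $T$ times with $O(1)$ work per iteration; since the $N$ SUs are partitioned among the $T$ types, $T\le N$; line~6 and the FOR loop each cost $O(N)$; so the total is $O(N)$. I expect the only real difficulty to be the bookkeeping in part (i): the WHILE-loop certificate $B_R\theta^{i_M+1} > R_0 r$ accounts only for the types strictly above the threshold, whereas the relevant count $\mu_0 = R_0+m$ also includes the $m$ partially-admitted type-$\theta^{i_M}$ SUs, so one must use monotonicity of $B_R\theta$ in the type to lift the stronger FOR-loop inequality at $\theta^{i_M}$ up to the higher types.
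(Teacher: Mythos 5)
Your proof is correct and follows essentially the same route as the paper's: a direct verification that every registered SU retains strictly positive utility (so will not drop to the zero-utility service plan) and that every unregistered SU fails the registration test at $\mu_0+1$, followed by the $O(N)$ operation count. Your version merely adds explicit loop invariants and the degenerate cases ($i_M=0$, $m=0$), which the paper's terser argument leaves implicit.
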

\begin{proof}
We first prove that all the SUs with the registration scheme have positive payoffs. In Algorithm \ref{alg:registration_complete_info}, suppose the last SU selecting the registration scheme is $u_i$, $u_i$ has positive utility since $B_R\theta_{i}>\mu_0\cdot r$ ($\mu_0$ is the final number of registered SUs). Any other registered SU $u_j$, must has $\theta_{j}\geq \theta_{i}$. Therefore, we have $B_R\theta_{j}\geq B_R\theta_{i}>\mu_0\cdot r$.

Then we show that there is no SU with the service plan scheme has the incentive to change its strategy. Any unregistered SU $u_k$ with $\theta_k=\theta_i$ will not change strategy since the condition $B_R\theta_{k}>(\mu_0+1)\cdot r$ is not satisfied. Therefore, its payoff will not improve by changing to the registration scheme. Also, for any other unregistered SU $u_l$ with $\theta_l<\theta_k$, $B_R\theta_{l}\leq B_R\theta_{k}\leq(\mu_0+1)\cdot r$.

In summary, no player has the incentive to deviate from the strategy provided by Algorithm \ref{alg:registration_complete_info}. It is also easy to see the number of operations in the algorithm is $O(N)$.
\end{proof}

Each SU can independently implement Algorithm \ref{alg:registration_complete_info} by acquiring the number of currently registered SUs. Note that the converged NE is not unique if there are multiple SUs with the same type $\theta^{i_M}$. Due to the NE output proved in Theorem \ref{thm:algorithm1_property}, no SU has the incentive to deviate unilaterally from the algorithm.

\subsubsection{Stage I: Optimal Parameter Selection of DO}

At stage I, DO selects pricing parameters considering the NE of NDRG. From Algorithm \ref{alg:registration_complete_info}, DO can get an estimation of $\mu_0$ and $\mu_1$ as
$
\mu_0=\mathcal{R}(B_R, r)=R_0+R_1, \quad\mu_1=N-\mu_0,
$
where $R_0=\sum_{i=i_M+1}^T N\cdot\beta^i$, $R_1=\arg\max_k\left\{\frac{B_R\theta^{i_M}}{r}>R_0+k\right\}$ and $i_M=\arg\min_k\left\{\frac{B_R\theta^k}{r}>\sum_{i=k}^T N\cdot\beta^i\right\}-1$.
$R_0$ is the total number of SUs with type larger than $\theta^{i_M}$. $R_1$ is the number of registered SUs of type $\theta^{i_M}$.

Together with Eqn. (\ref{eqn:IR}), we can rewrite $U_{DO}$ in (\ref{eqn:u_do}) as:
\begin{eqnarray}
U_{DO}(B_R, r, \mathcal{P})&=&\left(\mu_0r-\epsilon_0B_R^{\alpha}\right) + \sum_{i=1}^{i_M-1} N\beta^i\left(\frac{B-B_R}{\mu_1M}\theta^i-\epsilon_1\right) q_i \nonumber\\
&+&\left(N\beta^{i_M}-R_1\right)\left(\frac{B-B_R}{\mu_1M}\theta^{i_M}-\epsilon_1\right) q_{i_M}.
\end{eqnarray}

Here $U_{DO}$ also has a linear relationship for all $\{q_i\},\forall i<i_M$, we have:
\begin{equation}
\label{eqn:q}
q_i=\left\{
\begin{array}{ll}
0 & \textrm{if $\frac{B-B_R}{\mu_1}\frac{1}{M}\theta^i-\epsilon_1\leq0$} \\
M & \textrm{otherwise}
\end{array}
\right.
\end{equation}
The optimization problem for the DO becomes
\begin{eqnarray}
\label{eqn:max_udo2}
&&\max_{B_R, r, \{q_i\}} U_{DO}(B_R, r, \mathcal{P}) \nonumber\\
&\textrm{subject to:}& \quad (\ref{eqn:q}) \quad\textrm{and}\quad 0\leq B_R\leq B.
\end{eqnarray}
Similarly, we restrict the range of $r$ to be [\emph{\b{r}}, \emph{\={r}}] and use a two-dimensional exhaustive search to find the solution. We provide the numerical results in Section \ref{sec:simulation}.

\section{Pricing Solution for Strategic SUs in Incomplete Information Scenario}
\label{sec:strategic_incomplete}

In this section, we study the pricing problem under incomplete information for strategic SUs. Due to hidden information, at stage II, the DO cannot design query plans for each SU to extract all their revenue. Instead, the DO should offer a set of contract items for each type of SUs to choose from. Since SUs can obtain non-negative utility choosing the service plan scheme, the analysis of the NDRG will be much more complicated.

\subsection{Stage II: Contract Design under Incomplete Information}

In the incomplete information scenario, to ensure SUs of type $\theta^i$ have the incentive to select query plan $(q_i, p_i)$, another set of constraints called \emph{Incentive Compatibility} (IC) should be satisfied:
\begin{equation}
\theta^iv(q_i)-p_i\geq \theta^iv(q_j)-p_j, \forall 1\leq i,j\leq T.
\end{equation}
The physical meaning of IC is that an SU with type $\theta^i$ achieves the maximum utility when choosing the corresponding contract item $(q_i, p_i)$. Note that there are a total of $N(N-1)$ IC constraints.

To analyze the design of optimal contract items, we first relax the condition that $q_i (1\leq i\leq T)$ are integer numbers and allow $q_i$s to be any real number in the range $[0, M]$.
With the relaxed $q_i$s, the optimal contract design problem satisfies the following condition.
\begin{definition}[\textbf{Spence-Mirrlees Single-crossing Condition (SMC)} \cite{contract_book}]
Spence-Mirrlees single-crossing condition (SMC) is satisfied if the user's utility function $U(q, p, \theta)$ satisfies:$\frac{\partial}{\partial\theta}\left[-\frac{\partial U/\partial q}{\partial U/\partial p}\right]>0$, where $q$ is quantity of items, $p$ is the price for $q$ items and $\theta$ is the type of user.
\end{definition}
When SMC is satisfied, the $N(N-1)$ IC constraints can be reduced to a set of tractable equivalent constraints \cite{contract_book}.
\begin{lemma}
\label{lemma:IC_incomplete}
If SMC is satisfied, the necessary and sufficient conditions for the satisfaction of all the IC constraints are:
\begin{eqnarray}
\label{eqn:monotone_q}
\left\{
\begin{array}{l}
\theta^1v(q_1)-p_1=0 \\
\theta^iv(q_i)-p_i=\theta^iv(q_{i-1})-p_{i-1}, \quad 1 < i \leq T\\
q_i\geq q_j \quad \textrm{where} \quad \theta^i\geq\theta^j
\end{array}
\right.
\end{eqnarray}
\end{lemma}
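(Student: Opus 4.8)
The plan is to prove the two directions separately, exploiting the specific linear form $U(q,p,\theta)=\theta v(q)-p$ of the SU utility, for which SMC reduces here to $v$ being strictly increasing (this is exactly what $\frac{\partial}{\partial\theta}\left[-\frac{\partial U/\partial q}{\partial U/\partial p}\right]=v'(q)>0$ says for this utility). Throughout I would write $U^i:=\theta^iv(q_i)-p_i$ for the equilibrium utility of type $\theta^i$, and recall the ordering $\theta^1<\cdots<\theta^T$ so the monotonicity condition reads $q_1\le q_2\le\cdots\le q_T$.

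For sufficiency (the three conditions imply all IC), I would first note that the binding local-downward equalities give a telescoping identity $U^i-U^{i-1}=(\theta^i-\theta^{i-1})v(q_{i-1})$ for $1<i\le T$, hence $U^i-U^j=\sum_{k=j+1}^{i}(\theta^k-\theta^{k-1})v(q_{k-1})$ for any $j<i$. To check a downward deviation ($i>j$, i.e. $U^i\ge\theta^iv(q_j)-p_j=U^j+(\theta^i-\theta^j)v(q_j)$), I would use monotonicity: each $v(q_{k-1})\ge v(q_j)$ for $k-1\ge j$, so the telescoped sum is at least $v(q_j)\sum_{k=j+1}^{i}(\theta^k-\theta^{k-1})=(\theta^i-\theta^j)v(q_j)$, which is exactly the required inequality. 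For an upward deviation ($i<j$, i.e. $U^i\ge U^j+(\theta^i-\theta^j)v(q_j)$), the same identity summed from $i+1$ to $j$ together with $v(q_{k-1})\le v(q_j)$ for $k-1\le j-1$ gives $U^j-U^i\le(\theta^j-\theta^i)v(q_j)$, again the claim. Finally $U^1=\theta^1v(q_1)-p_1=0$ together with the telescoping shows $U^i\ge0$ for all $i$, so the reduced conditions even guarantee feasibility (IR) as a byproduct.

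For necessity, monotonicity is immediate: adding the IC inequality of $\theta^i$ against contract $(q_j,p_j)$ to that of $\theta^j$ against $(q_i,p_i)$ cancels the prices and leaves $(\theta^i-\theta^j)(v(q_i)-v(q_j))\ge0$, so strict monotonicity of $v$ forces $q_i\ge q_j$ whenever $\theta^i\ge\theta^j$. The binding statements are an optimality argument: I would show that if the local-downward IC of some type $\theta^i$ is slack, the DO can raise the \emph{suffix} of prices $p_i,p_{i+1},\dots,p_T$ by a common small $\delta>0$ without violating any IC constraint, while strictly increasing $U_{DO}$; therefore at the optimum every local-downward IC, and likewise the type-$\theta^1$ IR ($\theta^1v(q_1)-p_1=0$, obtained by shifting the whole price vector), must bind. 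The supporting fact is that under IC plus monotonicity the most tempting lower contract for type $\theta^i$ is the adjacent one: $\max_{j<i}\left[\theta^iv(q_j)-p_j\right]=\theta^iv(q_{i-1})-p_{i-1}$, which I would prove by writing $\left[\theta^iv(q_{i-1})-p_{i-1}\right]-\left[\theta^iv(q_j)-p_j\right]=\big(\left[\theta^{i-1}v(q_{i-1})-p_{i-1}\right]-\left[\theta^{i-1}v(q_j)-p_j\right]\big)+(\theta^i-\theta^{i-1})(v(q_{i-1})-v(q_j))\ge0$ using IC of $\theta^{i-1}$ and monotonicity; this tells us which constraint the perturbation $\delta$ must stay below.

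The main obstacle is the necessity direction, specifically verifying that the suffix-price perturbation simultaneously respects \emph{all} the IC constraints: the "mimic a contract at or above $i$" constraints are relaxed because those contracts become more expensive; the delicate ones are "a high type mimics a low type below $i$", and it is precisely the adjacency fact and the choice of $\delta$ smaller than the slack of the adjacent downward constraint (propagated along the telescoping chain) that keeps them satisfied — and shifting a whole suffix rather than a single price is what protects the high types' upward-IC constraints. I would also flag that the lemma is really a characterization of the \emph{optimal} contract, so "necessary" should be read as "the optimum may be taken to satisfy these equalities"; the pure feasibility characterization is the sufficiency argument plus monotonicity plus the inequalities $U^i\ge\theta^iv(q_{i-1})-p_{i-1}$.
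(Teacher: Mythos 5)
The paper offers no proof of this lemma at all: it is invoked as a standard contract-theory result with a pointer to \cite{contract_book}, so there is nothing in-paper to compare against beyond that citation. Your sketch reconstructs the standard argument correctly for the specific quasilinear utility $U=\theta v(q)-p$ (for which SMC indeed reduces to $v'(q)>0$, valid here since $v(q)=\frac{B-B_R}{\mu_1 M}q$ with $B_R<B$): sufficiency by telescoping the binding local downward equalities, $U^i-U^j=\sum_{k=j+1}^{i}(\theta^k-\theta^{k-1})v(q_{k-1})$, and bounding each $v(q_{k-1})$ by $v(q_j)$ from the correct side using $q_1\le\cdots\le q_T$; necessity of monotonicity by adding the two crossed IC inequalities so the prices cancel; and the binding equalities by the suffix-price (and whole-vector) perturbation at the DO's optimum, with the adjacency fact $\max_{j<i}\bigl[\theta^i v(q_j)-p_j\bigr]=\theta^i v(q_{i-1})-p_{i-1}$ identifying the relevant slack. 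Two remarks. First, the caveat you raise yourself is exactly right and worth keeping: the equalities in the first two lines are not consequences of the IC inequalities alone, and the first line is an IR binding, not an IC one; they hold only at a profit-maximizing contract with the IR constraint $\theta^1 v(q_1)-p_1\ge 0$ included in the constraint set, so ``necessary'' must be read as ``may be imposed without loss at the optimum,'' which is how the paper actually uses the lemma. Second, in the suffix-perturbation step you should also verify that the IR constraints of the types $k\ge i$ survive the uniform increase by $\delta$; this follows from the same chaining you already use, namely $U^k\ge U^i\ge U^{i-1}+s_i(i-1)\ge\delta$ where $s_i(i-1)$ is the slack of the adjacent downward constraint, together with $s_k(j)\ge s_i(j)\ge s_i(i-1)$ for $k\ge i$, $j<i$ via IC of $k$ against contract $i$ and monotonicity — a one-line addition, not a gap. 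With those notes, your proposal is sound and supplies the derivation the paper delegates to the reference.
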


It is easy to verify that the SMC condition holds when $q_i$s are real numbers.
Based on Lemma \ref{lemma:IC_incomplete}, introducing a $\theta^0=0$, we can express the query plans as
\begin{eqnarray}
\label{eqn:pk}
\left\{
\begin{array}{l}
p_k=\sum_{i=1}^k \left[\theta^iv(q_i)-\theta^iv(q_{i-1})\right],\quad \forall 1\leq k\leq T\\
q_i\geq q_j \quad \forall i\geq j
\end{array}
\right.
\end{eqnarray}
Substituting (\ref{eqn:pk}) and $N_i=\mu_1\cdot\beta^i$ into (\ref{eqn:u_do}) and putting the terms related to the same query variable together, we have:
\begin{eqnarray*}
U_{DO}(\mathcal{P}) &=&(\mu_0r-\epsilon_0B_R^{\alpha}) \\
&+& \sum_{i=1}^T\left[
\mu_1\beta^i\theta^iv(q_i)-\Delta^iv(q_i)\sum_{j=i+1}^T\mu_1\beta^j - \mu_1\beta^i\epsilon_1q_i\right] \\
&=&(\mu_0r-\epsilon_0B_R^{\alpha}) + \sum_{i=1}^T g_i(\mu_1)q_i,
\end{eqnarray*}
where $\Delta^i=\theta^{i+1}-\theta^i,\forall i<T$, $\Delta^T=0$ and $g_i(\mu_1)=\beta^i\theta^i\frac{B-B_R}{M}-\Delta^i\frac{B-B_R}{M}\sum_{j=i+1}^T\beta^j - \mu_1\beta^i\epsilon_1$.
The optimal contract design problem for DO can then be expressed as
\begin{eqnarray}
&U_{DO}(\mathcal{P})&=(\mu_0r-\epsilon_0B_R^{\alpha}) + \sum_{i=1}^Tg_i(\mu_1)q_i \nonumber\\
&\max_{\{q_i\}}& U_{DO}(\mathcal{P}) \nonumber\\
&\textrm{subject to:}& 0\leq q_i\leq  M, \quad q_i\geq q_j,\,\forall i\geq j.
\end{eqnarray}

\begin{algorithm}[tp]
\caption {Finding Valid Query Sequence.}
\label{alg:valid_q}
\begin{algorithmic}[1]
\STATE $i_S(\mu_1)=T$;
\WHILE {$g_{i_S(\mu_1)}(\mu_1)>0$}
\STATE $i_S(\mu_1)=i_S(\mu_1)-1$;
\ENDWHILE
\STATE $\hat{q_i}(\mu_1)=0, \forall i\leq i_S(\mu_1)$; $\quad \hat{q_i}(\mu_1)=M, \forall i> i_S(\mu_1)$;
\end{algorithmic}
\end{algorithm}

Now, we consider $q_i$s to be integers. Since factor $g_iq_i$ only relates to $q_i$ given $\mu_1$, we can then find the optimal value for each $q_i$ independently. We use $\tilde{q_i}(\mu_1)$ to denote $\argmax_{0\leq q_i\leq  M} g_iq_i$. The set of $\{\tilde{q_i}(\mu_1)\}$ can be achieved at the boundary points (0 or $M$). It is worth noting that $\{\tilde{q_i}(\mu_1)\}$ is a function of $\mu_1$ since $g_i$ is a function of $\mu_1$. However, $\tilde{q_i}$ may not be monotonically nondecreasing on $i$ (requirement of Eqn. (\ref{eqn:monotone_q})). For example, if for some $i^*$, $g_{i^*}>0$ and $g_{i^*+1}<0$ then $q_{i^*}=M$ and $q_{i^*+1}=0$. We design Algorithm \ref{alg:valid_q} to obtain valid $\{q_i\}$. We first find the largest $i_S(\mu_1)$, such that $g_{i_S(\mu_1)}<0$ and set all $q_{j}, j\leq i_S(\mu_1)$ to be 0. Note that $q_i$s obtained in Algorithm \ref{alg:valid_q} can be 0 or $M$, which is the same as that when $q_i$s are real numbers. Therefore, it is valid for us to replace the IC constraints with Eqn. \ref{eqn:monotone_q}.

Suppose the valid queries obtained are $\hat{q_i}$ for type $\theta^i$ SUs, the optimal query plans at stage II are
\begin{eqnarray*}
\left\{
\begin{array}{lll}
\mathcal{P}&=&\{(q_i, p_i)\}\bigcup\{(q_0=0, p_0=0)\},\quad 1\leq i \leq T \\
q_i&=&\hat{q_i}(\mu_1)\\
p_i&=&\hat{p_i}(\mu_1)=\sum_{j=1}^i \left[\theta^jv(\hat{q}_j(\mu_1))-\theta^jv(\hat{q}_{j-1}(\mu_1))\right]
\end{array}
\right.
\end{eqnarray*}

Note that Algorithm \ref{alg:valid_q} finds a suboptimal sequence of valid $\{q_i\}$. Because Algorithm \ref{alg:valid_q} has restricted the range of $\{q_i\}$ to be only boundary points. One way to obtain optimal $\{q_i\}$ is the brunching and ironing algorithm \cite{contract_book}. Instead of restricting the range of $\{q_i\}$, this algorithm first finds the optimal sequence of $\{q_i\}$ ignoring the condition of Eqn. (\ref{eqn:monotone_q}). Then it continues to find pairs of $q_i, q_{i+1}$ with $q_i<q_{i+1}$. After that, it finds a new $q_i^*\in[q_i, q_{i+1}]$ such that the payoff is maximized when setting $q_i=q_{i+1}=q_i^*$. It continues until the condition of Eqn. (\ref{eqn:monotone_q}) is satisfied. The algorithm is named since it ``iron'' out decreasing sub-sequence in $\{q_i\}$.  However, the optimal algorithm is of high complexity and cannot guarantee the existence of an NE in the NDRG at stage I. It is difficult to prove theoretical the gap between the optimal and suboptimal solution in this paper, we will evaluate the price of the suboptimal $\{q_i\}$ in Section \ref{sec:simulation}.

\subsection{Stage I: Database Registration Game of SUs}

We now consider the NDRG between SUs under incomplete information scenario.

\subsubsection{Existence of an NE}

At stage II, Algorithm \ref{alg:valid_q} is published by the DO to the SUs. Assuming the type of $u_i$ is $\theta^{\sigma(i)}$, the utility of $u_i$ when choosing the service plan scheme can be computed as:
$
U_i=\theta^{\sigma(i)}v(\hat{q}_{\sigma(i)}(\mu_1))-\hat{p}_{\sigma(i)}(\mu_1)
$.
Note that $\mu_1$ is the number of SUs choosing the service plan scheme. We can therefore rewrite the utility of $u_i$ considering the two possible strategies as
\begin{eqnarray}
\label{eqn:u_imcomplete_info}
U_i(a_i, a_{-i})=\left\{
\begin{array}{ll}
\frac{B_R}{\mu_0}\theta_i-r & \textrm{if $a_i=0$} \\
\theta^{\sigma(i)}\frac{B-B_R}{\mu_1M}(\hat{q}_{\sigma(i)}(\mu_1))-\hat{p}_{\sigma(i)}(\mu_1) & \textrm{if $a_i=1$}
\end{array}
\right. \label{eqn:u_incomplete_info}
\end{eqnarray}
From the utility function of $u_i$, we can verify that NDRG under incomplete information is also a congestion game when using Algorithm \ref{alg:valid_q} to obtain $\{q_i\}$.
\begin{lemma}
\label{lemma:NDRG_is_CG2}
NDRG under incomplete information is an unweighted congestion game.
\end{lemma}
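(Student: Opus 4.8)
The plan is to verify the two conditions in the definition of an unweighted congestion game for $G=(\mathcal{N},\mathcal{S}=\{0,1\},\{U_i\})$ with the utilities in (\ref{eqn:u_incomplete_info}). The first condition is immediate, exactly as in the proof of Lemma~\ref{lemma:NDRG_is_CG}: every SU has the same strategy set $\mathcal{S}=\{0,1\}$. For the second condition I must show that an SU's payoff, viewed as a function of the number of players choosing the same strategy, is monotonically non-increasing, and I treat the two strategies separately. Strategy $a_i=0$ is easy: the payoff $\frac{B_R}{\mu_0}\theta_i-r$ is non-increasing in $\mu_0$ because $B_R\ge 0$ and $\theta_i>0$, just as in the complete-information case. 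All the work is in strategy $a_i=1$, where the payoff depends on $\mu_1$ in three ways at once: through $v(\cdot)=\frac{B-B_R}{\mu_1 M}(\cdot)$, through $\hat q_{\sigma(i)}(\mu_1)$, and through $\hat p_{\sigma(i)}(\mu_1)$.

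The first step for strategy $a_i=1$ is to collapse $U_i=\theta^{\sigma(i)}v(\hat q_{\sigma(i)}(\mu_1))-\hat p_{\sigma(i)}(\mu_1)$ into a simpler form. Writing $k=\sigma(i)$ and substituting $\hat p_k(\mu_1)=\sum_{j=1}^{k}\bigl[\theta^j v(\hat q_j(\mu_1))-\theta^j v(\hat q_{j-1}(\mu_1))\bigr]$, the terms $\theta^j v(\hat q_j)$ telescope and, after reindexing and using $\hat q_0=0$, one is left with $U_i=\sum_{j=1}^{k-1}\Delta^j\, v(\hat q_j(\mu_1))$, where $\Delta^j=\theta^{j+1}-\theta^j>0$ for $j<T$. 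The second step exploits that Algorithm~\ref{alg:valid_q} returns only boundary values $\hat q_j(\mu_1)\in\{0,M\}$, with $\hat q_j(\mu_1)=M$ precisely when $j>i_S(\mu_1)$; hence $v(\hat q_j(\mu_1))$ equals either $\frac{B-B_R}{\mu_1}$ or $0$, the sum telescopes once more, and $U_i=\frac{B-B_R}{\mu_1}\bigl(\theta^{k}-\theta^{i_S(\mu_1)+1}\bigr)$ when $i_S(\mu_1)<k-1$ and $U_i=0$ otherwise; equivalently $U_i=\frac{B-B_R}{\mu_1}\max\{0,\theta^{k}-\theta^{i_S(\mu_1)+1}\}$.

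The third and key step is to show that $i_S(\mu_1)$ is non-decreasing in $\mu_1$. Each $g_j(\mu_1)$ is affine in $\mu_1$ with slope $-\beta^j\epsilon_1<0$, hence strictly decreasing. Tracing the \textbf{while} loop of Algorithm~\ref{alg:valid_q}, which descends from $T$ through strictly positive $g$-values and halts at the first index $i$ with $g_i\le 0$: if it halts at $i_S(\mu_1)$ for a given $\mu_1$, then for any $\mu_1'>\mu_1$ we have $g_{i_S(\mu_1)}(\mu_1')<g_{i_S(\mu_1)}(\mu_1)\le 0$, so the loop for $\mu_1'$ cannot descend below $i_S(\mu_1)$, i.e. $i_S(\mu_1')\ge i_S(\mu_1)$. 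Consequently $\theta^{i_S(\mu_1)+1}$ is non-decreasing in $\mu_1$ (with the convention that $U_i$ is already $0$ once $i_S(\mu_1)\ge k-1$), so $\max\{0,\theta^{k}-\theta^{i_S(\mu_1)+1}\}$ is non-negative and non-increasing, while $\frac{B-B_R}{\mu_1}$ is non-negative and non-increasing; a product of two such functions is non-increasing, which is the required monotonicity and completes the proof.

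I expect the main obstacle to be the bookkeeping in the last two steps: getting the closed form of $U_i$ right after the two telescopings, and — more delicately — handling the boundary cases ($i_S(\mu_1)=T$, and $i_S(\mu_1)\ge\sigma(i)-1$, where $U_i$ is pinned at $0$ and $\theta^{i_S(\mu_1)+1}$ may be undefined) so that the monotonicity of $i_S(\mu_1)$ transfers cleanly to $U_i$. The two defining conditions themselves, and the strategy-$0$ half of the second one, are routine and mirror the proof of Lemma~\ref{lemma:NDRG_is_CG}.
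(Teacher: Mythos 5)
Your proposal is correct, and it follows the same skeleton as the paper's proof (common strategy set; strategy $0$ non-increasing in $\mu_0$; strategy $1$ handled through the monotonicity of $g_i(\mu_1)$, hence of the threshold $i_S(\mu_1)$ and of $\hat q_i(\mu_1)$), but you supply the step the paper glosses over. The paper argues only that $\hat q_i(\mu_1)$ is non-increasing and that $U_i=0$ when $\hat q_i(\mu_1)=0$ while $U_i\geq 0$ when $\hat q_i(\mu_1)=M$; this alone does not establish monotonicity of $U_i$ inside the regime $\hat q_{\sigma(i)}(\mu_1)=M$, where the payoff still varies with $\mu_1$ through both the valuation $v(\cdot)$ and the price $\hat p_{\sigma(i)}(\mu_1)$ from (\ref{eqn:pk}). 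Your telescoping of the IC-based prices to $U_i=\sum_{j=1}^{\sigma(i)-1}\Delta^j v(\hat q_j(\mu_1))$ and then, using the boundary structure of Algorithm \ref{alg:valid_q}, to $U_i=\frac{B-B_R}{\mu_1}\max\{0,\theta^{\sigma(i)}-\theta^{i_S(\mu_1)+1}\}$ closes exactly this gap, reducing the claim to the product of two non-negative non-increasing functions of $\mu_1$. You also get the direction of the threshold right: $i_S(\mu_1)$ is non-decreasing in $\mu_1$ (the paper's proof asserts it is ``non-increasing,'' a slip, though its consequence that $\hat q_i(\mu_1)$ is non-increasing is what matters and is correct). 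One small refinement: $g_j$ is only weakly decreasing in $\mu_1$ when $\epsilon_1=0$ (the paper's default in the simulations), so your strict inequality $g_{i_S(\mu_1)}(\mu_1')<g_{i_S(\mu_1)}(\mu_1)$ should be weakened to $\leq$; the argument that $i_S(\mu_1')\geq i_S(\mu_1)$ goes through unchanged, since the loop for $\mu_1'$ still cannot descend past an index where $g\leq 0$.
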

\begin{proof}
All the players in NSDG share a common set of strategies $\mathcal{S}$. The payoff function of player $u_i$ is defined as $U_i$ in (\ref{eqn:u_incomplete_info}). For strategy $0$, $U_i$ is a monotonically non-increasing function with $\mu_0$. For strategy $1$, $g_i(\mu_1)$ is a non-increasing function of $\mu_1$. Therefore, $i_S(\mu_1)$ is also a non-increasing function of $\mu_1$. As a result, $\hat{q_i}(\mu_1)$ is a non-increasing function of $\mu_1$.
When $\hat{q_i}(\mu_1)=0$, $U_i=0$ and when $\hat{q_i}(\mu_1)=M$, $U_i\geq0$. Therefore $U_i$ is also a monotonically non-increasing function with $\mu_1$.
\end{proof}
As a result, we can also find an NE with an improvement path with a finite number of improvement steps. According to Lemma \ref{lemma:property_of_CG} and Lemma \ref{lemma:NDRG_is_CG2}, we have:
\begin{theorem}
\label{thm:NDRG_property2}
Under incomplete information, NDRG has a pure strategy NE which can be achieved with an improvement path of at most $N(N+1)$ steps starting from any strategy-tuple.
\end{theorem}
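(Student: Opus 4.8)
The plan is to obtain the theorem as an immediate consequence of the two lemmas already proved in this section together with the general facts about congestion games quoted in Lemma~\ref{lemma:property_of_CG}. First I would invoke Lemma~\ref{lemma:NDRG_is_CG2}, which certifies that, once the DO publishes Algorithm~\ref{alg:valid_q} at stage II, the stage-I game NDRG under incomplete information is an unweighted congestion game: the players share the common strategy set $\mathcal{S}=\{0,1\}$; the payoff for strategy $0$ is $\frac{B_R}{\mu_0}\theta_i-r$, which is non-increasing in $\mu_0$; and the payoff for strategy $1$ is non-increasing in $\mu_1$ via the chain ``$g_i(\mu_1)$ non-increasing $\Rightarrow i_S(\mu_1)$ non-increasing $\Rightarrow \hat{q}_i(\mu_1)$ non-increasing'', combined with $U_i=0$ when $\hat{q}_i(\mu_1)=0$ and $U_i\ge 0$ when $\hat{q}_i(\mu_1)=M$.

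Second, I would apply Lemma~\ref{lemma:property_of_CG} to this game. That lemma guarantees both (i) the existence of a pure-strategy NE and (ii) that from an arbitrary starting strategy-tuple there is a best-reply improvement path of length $L\le s\binom{n+1}{2}$. Substituting $s=|\mathcal{S}|=2$ and $n=N$ gives $L\le 2\binom{N+1}{2}=N(N+1)$, which is exactly the claimed bound, completing the argument. So the proof is essentially two lines: ``By Lemma~\ref{lemma:NDRG_is_CG2}, NDRG under incomplete information is an unweighted congestion game; applying Lemma~\ref{lemma:property_of_CG} with $s=2$ and $n=N$ yields the result.''

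The only point that needs any care — and it is already discharged — is verifying that the definition of an unweighted congestion game is \emph{literally} satisfied in the incomplete-information setting: in particular that the stage-II contract prices $\hat{p}_{\sigma(i)}(\mu_1)$ do not inject any non-monotone dependence on $\mu_1$ that would spoil the non-increasing property of $U_i$ under strategy $1$. This is precisely the content of Lemma~\ref{lemma:NDRG_is_CG2}, so no additional work is required; the theorem is a genuine corollary rather than something demanding a fresh argument. I would therefore present it as a short proof that names the two lemmas and performs the elementary substitution $2\binom{N+1}{2}=N(N+1)$, with no case analysis or calculation beyond that.
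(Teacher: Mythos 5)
Your proposal is correct and matches the paper's own treatment: the paper also derives Theorem~\ref{thm:NDRG_property2} as an immediate corollary of Lemma~\ref{lemma:NDRG_is_CG2} (NDRG under incomplete information is an unweighted congestion game) and Lemma~\ref{lemma:property_of_CG}, substituting $s=2$ and $n=N$ to get $2\binom{N+1}{2}=N(N+1)$. No further comment is needed.
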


\begin{algorithm}[tp]
\caption {Distributed Database Registration Algorithm under Incomplete Information.}
\label{alg:registration_incomplete_info}
\begin{algorithmic}[1]
\STATE SUs randomly select one pricing scheme and record in the database;
\STATE DO assigns each SU a unique ID in the range $[0, N - 1]$;
\STATE Time duration counter $t=0$;
\WHILE {$\mathbf{a}$ is not an NE}
\STATE Active SU ID in $t$ is $i = t\mod N$;
\IF {$U_i(a_i, \mu_{a_i})<U_i(1-a_i, \mu_{1-a_i}+1)$}
\STATE Update strategy from $a_i$ to $1-a_i$ within $t$;
\ENDIF
\STATE $t=t+1$;
\ENDWHILE
\end{algorithmic}
\end{algorithm}

\subsubsection{Distributed Database Registration Algorithm}

In the case of incomplete information, some SUs of higher types are guaranteed with non-negative payoff with the service plan scheme. These SUs may prefer the service plan scheme.

Unlike the complete information case, the SUs now have to make the pricing scheme choice repeatedly. The SUs can act according to Algorithm \ref{alg:registration_incomplete_info} to achieve an NE in a distributed manner via repeated improvements. Algorithm \ref{alg:registration_incomplete_info} is based on the improvement path. Since all the SUs are connected with the database, we assume at stage I, the SUs are synchronized with the database. Also, we assume there are predefined \emph{time slots}. SUs can determine whether to change their strategies within a time slot one by one according to the order of their assigned IDs. In each time slot, only one SU may update his strategy to optimize its own utility given the new situation of other SUs' choices. It is easy to see that in Algorithm \ref{alg:registration_incomplete_info}, every update is a step in the improvement path. Therefore, Algorithm \ref{alg:registration_incomplete_info} is guaranteed to find an NE within $N(N+1)$ updates. Note that the found NE by the algorithm is also not unique if there are many SUs with the same type.

\subsection{Stage I: Optimal Parameter Selection of DO}

\subsubsection{Estimate the SUs' choices in an NE}

Estimating $\mu_0$ and $\mu_1$ is essential for the DO to optimize its pricing parameters. However, due to the randomness of the SUs' strategies the NE found by Algorithm \ref{alg:registration_incomplete_info} is not unique. Therefore, it is impossible to get an accurate estimation. We assume the DO uses the expected value as an estimation of $\mu_0$ and $\mu_1$. However, to calculate the expectation, the DO needs to generate all possible strategy-tuples and check all possible NEs. Therefore the DO has to generate $2^N$ different strategy-tuples, which is computationally infeasible. However, by exploring the nature of SUs in our scenario, we are able to dramatically reduce the computation complexity of the DO. The following lemma provides a way for the DO to estimate $\mu_0$ and $\mu_1$ with linear complexity with $N$.

\begin{lemma}
\label{lemma:NDRG_property_incomplete}
For every NE strategy-tuple $\mathbf{a}$, there exists another NE $\mathbf{\bar{a}}$ with the same number of registered SUs. In $\mathbf{\bar{a}}$, there is only one $\theta^{i_S}, 1\leq i_S\leq T$, such that $\forall i\neq i_S, \forall u_j, u_k$, if $\,\theta^j=\theta^k=i, a_j=a_k$.
\end{lemma}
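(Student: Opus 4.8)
The plan is to lean entirely on the congestion-game structure of the NDRG: in any strategy-tuple the payoff an SU gets, \emph{and} the payoff it would get by switching schemes, depends only on the two occupancy numbers $\mu_0$ and $\mu_1$ and on that SU's own type, never on the identities of the other SUs. So fix an arbitrary NE $\mathbf{a}$ and let $\mu_0$ and $\mu_1=N-\mu_0$ be its numbers of registered and unregistered SUs. From (\ref{eqn:u_incomplete_info}), a registered SU of type $\theta^i$ has no profitable deviation precisely when $\frac{B_R}{\mu_0}\theta^i-r\geq\theta^i\frac{B-B_R}{(\mu_1+1)M}\hat{q}_i(\mu_1+1)-\hat{p}_i(\mu_1+1)$; call this inequality $A_i$. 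Symmetrically, an unregistered SU of type $\theta^i$ has no profitable deviation precisely when $\theta^i\frac{B-B_R}{\mu_1 M}\hat{q}_i(\mu_1)-\hat{p}_i(\mu_1)\geq\frac{B_R}{\mu_0+1}\theta^i-r$; call this $B_i$. Both $A_i$ and $B_i$ are functions of $\mu_0$, $\mu_1$ and $i$ alone.

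Next I would classify the types. Because $\mathbf{a}$ is an NE and every type with $\beta^i>0$ has at least one SU parked on one of the two schemes, at least one of $A_i$, $B_i$ holds for each such type. Call $\theta^i$ \emph{flexible} if both hold, \emph{registration-forced} if only $A_i$ holds, and \emph{service-forced} if only $B_i$ holds. In $\mathbf{a}$ itself, a registration-forced type must be entirely registered — an unregistered SU of that type would violate $B_i$ — and a service-forced type entirely unregistered; only flexible types may be split. Let $k$ be the number of registered SUs in $\mathbf{a}$ that belong to flexible types; equivalently $k=\mu_0$ minus the number of SUs of registration-forced types, all of which are registered.

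Then I would build $\mathbf{\bar a}$: keep every registration-forced type entirely registered and every service-forced type entirely unregistered; list the SUs of flexible types in a fixed order of type and let exactly $k$ of them (the first $k$ in that order) register, the rest choosing the service plan. By construction $\mathbf{\bar a}$ again has $\mu_0$ registered SUs. It is an NE: a registration-forced SU satisfies $A_i$, a service-forced SU satisfies $B_i$, and a flexible SU satisfies both $A_i$ and $B_i$, so it is content on whichever side it landed — and since $\mu_0$ and $\mu_1$ are unchanged, none of these inequalities, for anyone, is disturbed by the reshuffling. Finally, in $\mathbf{\bar a}$ the only type that can contain SUs on both schemes is the single flexible type at which the running count hits $k$; every other type is homogeneous, and that type is the desired $\theta^{i_S}$. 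The corner cases $\mu_0=0$ and $\mu_0=N$ leave every type homogeneous and need no work.

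I expect the main obstacle to be pinning down rigorously the first sentence — that each SU's incentive constraint, including the deviation payoff computed from the contract $\{\hat{q}_i(\cdot),\hat{p}_i(\cdot)\}$ evaluated at $\mu_1\pm1$, is a function of $(\mu_0,\mu_1)$ and its own type only. Once that is in hand, two things follow at once and drive the proof: whether a type is flexible or forced is itself determined by $(\mu_0,\mu_1)$ and hence invariant under the reshuffling, and moving registrations among equally-constrained flexible SUs can never manufacture a profitable deviation for anybody. The counting argument that concentrates the $k$ flexible registrations on consecutive types — leaving at most one split — is then routine.
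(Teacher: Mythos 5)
Your proof is correct, but it takes a genuinely different route from the paper's. The paper proceeds by repeated occupancy-preserving swaps: it picks one SU of some type on the registration side and one SU of another type on the service side, exchanges their strategies, notes that $\mu_0$ and $\mu_1$ (and hence every payoff and deviation payoff) are unchanged, and iterates such swaps until at most one type remains split; the incentive claim for the two swapped SUs is justified there by the presence of same-type SUs already sitting on the other side, i.e., the swaps are implicitly carried out between types that are already split. You instead make the underlying invariance explicit and rebuild the equilibrium in one shot: you isolate the per-type incentive inequalities $A_i$ (a registered type-$\theta^i$ SU does not gain by deviating to the service plan at congestion $\mu_1+1$) and $B_i$ (an unregistered one does not gain by deviating to registration at congestion $\mu_0+1$), observe that both depend only on $(\mu_0,\mu_1)$ and the type --- which is immediate from the utility expression in (\ref{eqn:u_incomplete_info}) with the published contract $\{\hat q_i(\mu_1),\hat p_i(\mu_1)\}$, i.e., exactly the player-specific congestion-game structure of Lemma \ref{lemma:NDRG_is_CG2}, so the ``main obstacle'' you anticipate is already settled --- classify types as forced or flexible, note that a forced type cannot be split in the original NE, and repack the $k$ flexible registrations consecutively by type. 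Your construction buys rigor and transparency: it shows in one step why any occupancy-preserving reassignment respecting the forced types is again an NE, a point the paper's swap argument leaves partly implicit; the paper's iteration buys brevity and, as a byproduct, a path of intermediate equilibria connecting $\mathbf{a}$ to $\mathbf{\bar{a}}$. Both arguments hinge on the same key fact, and your counting step (at most one type straddles the cutoff $k$, with the corner cases $\mu_0\in\{0,N\}$ trivial) is sound.
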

\begin{proof}
Suppose in an NE, two of the user types are $\theta^i$ and $\theta^j$ and the strategy tuple is $\mathbf{a}$. Suppose $n_i^0$ of the $\theta^i$ users are choosing $0$ and $n_i^1$ of the $\theta^i$ SUs are choosing $1$. $n_j^0$ of the $\theta^j$ SUs are choosing $0$ and $n_j^1$ of the $\theta^j$ SUs are choosing $1$. Assume one $\theta^i$ SU $u_i$ changes its strategy from $0$ to $1$ and one $\theta^j$ SU $u_j$ changes its strategy from $1$ to $0$ at the same time. After the above strategy inter-changing, the strategy tuple, denoted as $\mathbf{a'}$ is also an NE. That is because $\mu_0$ and $\mu_1$ are the same in both $\mathbf{a}$ and $\mathbf{a'}$.
And none of the other SUs except $u_i$ and $u_j$ will have the incentive to change strategy in the previous NE tuple $\mathbf{a}$. We can also see that $u_i$ and $u_j$ have no incentive to change strategies since no other $\theta^i$ and $\theta^j$ SUs want to change from $0$ to $1$ or from $1$ to $0$.

Similarly, another equivalent NE can be achieved if some of the type $\theta^i$ SUs change from $1$ to $0$ and some of the type $\theta^j$ SUs change from $0$ to $1$ at the same time. By interchanging strategy choices of SUs in an NE scenario, we can obtain an NE tuple $\mathbf{\bar{a}}$ in which there is at most one type of SUs, say, $i_S$, satisfying the conditions in Lemma \ref{lemma:NDRG_property_incomplete}.
\end{proof}

Lemma \ref{lemma:NDRG_property_incomplete} says all SUs with the same type other than the type $\theta^{i_S}$ have the same strategies in $\mathbf{\bar{a}}$.
As a result, the DO can use Algorithm \ref{alg:estimate_SU_incomplete} to estimate the number of $\mu_0$ and $\mu_1$. It first chooses one type to be the $\theta^{i_S}$ and generates all possible strategy combinations of the other types. Then the DO checks whether it is possible to be an NE under the chosen $\theta^{i_S}$ by varying the number of $\theta^{i_S}$ type SUs' choices on strategy $0$ and $1$. In the algorithm, the DO only needs to check no more than $2^T\cdot N$ different strategy-tuples which is linear with $N$.

\begin{algorithm}[tp]
\caption {Estimating the SUs' Strategy under Incomplete Information.}
\label{alg:estimate_SU_incomplete}
\begin{algorithmic}[1]
\STATE Multi-set $R_{ALL}=\{\}$; // all possible values of $\mu_0$
\FOR {Each user type $\theta^i$}
\FOR {$Counter$=1 to $2^T$}
\STATE $a_{-i}$= the bits in $Counter$'s binary representation;
\STATE Assign strategy of SUs with type $\theta\neq\theta^i$ with $a_{-i}$;
\FOR {$j=0$ to $N\cdot \beta^i$ }
\STATE Let $j$ of the type $\theta^i$ SUs choose $0$;
\STATE Let $N\cdot\beta^i-j$ of the type $\theta^i$ SUs choose $1$;
\STATE \textbf{if} {It is an NE} \textbf{then} $R_{ALL} = R_{ALL}\bigcup \{\textrm{current}\ \mu_0\}$;
\ENDFOR
\ENDFOR
\ENDFOR
\STATE $\mu_0=$ the most common value in the multi-set $R_{ALL}$;
\STATE $\mu_1=N-\mu_0$;
\end{algorithmic}
\end{algorithm}

\subsubsection{Optimization Parameter Selection}

Utilizing the estimation of $\mu_0$ and $\mu_1$ obtained from Algorithm \ref{alg:estimate_SU_incomplete}, the DO now needs to solve the following optimization problem:
\begin{eqnarray}
\label{eqn:max_udo3}
&U_{DO}(B_R, r)&=(\mu_0(B_R, r)r-\epsilon_0B_R^{\alpha}) + \sum_{i=1}^Tg_i(B_R, r)\hat{q}_i(B_R, r) \nonumber\\
&\max_{B_R, r}& U_{DO}(B_R, r) \nonumber\\
&\textrm{subject to:}& 0\leq B_R\leq  B.
\end{eqnarray}
where
$
g_i(B_R, r)=N\beta^i\theta^i\frac{B-B_R}{\mu_1(B_R, r)M} - \Delta^i\frac{B-B_R}{\mu_1(B_R, r)M}\sum_{j=i+1}^TN\beta^j -N\beta^i\epsilon_1
$.
Note that $\hat{q}_i$ is a function of $B_R, r$. Again, the problem can be solved using two-dimensional exhaustive search to find the solution when we restrict the range of $r$ to be [\emph{\b{r}}, \emph{\={r}}]. We provide numerical results in Section \ref{sec:simulation}.

\section{Numerical Results}
\label{sec:simulation}

In this section, we use numerical results to evaluate the proposed hybrid pricing scheme. We first introduce the simulation setup and then discuss the results.

\subsection{Simulation Setup}

We assume there are a total of $N=100-1000$ SUs in the network. The default value of $N$ is 100. There are $M=100$ periods.
The expected available TVWS is $B=60$ MHz \cite{TVWS_availability1}\cite{TVWS_availability2}
We assume there are $T=5-30$ types, with 10 as the default value.
For $T=10$, we generate 5 different possible type distributions summarized in table \ref{tab:distributions}.
For all $T=5-30$, we also have a random type distribution where the number of SUs in each type is randomly generated.

\begin{table}[tp]
\renewcommand{\arraystretch}{0.9}
\caption{$N=100$ SUs in $T=10$ types.}
\label{tab:distributions}
\centering
\begin{tabular}{|c|c|c|c|c|c|c|c|c|c|c|}
  \hline
  $\theta^i$   & 1  & 2  & 3  & 4  & 5  & 6  & 7  & 8  & 9  & 10  \\
  \hline
  Distr. 1     & 10 & 10 & 10 & 10 & 10 & 10 & 10 & 10 & 10 & 10 \\
  \hline
  Distr. 2     & 1  & 3  & 5  & 7  & 9  & 11 & 13 & 15 & 17 & 19 \\
  \hline
  Distr. 3     & 19 & 17 & 15 & 13 & 11 & 9  & 7  & 5  & 3  & 1  \\
  \hline
  Distr. 4     & 2  & 6  & 10 & 14 & 18 & 18 & 14 & 10 & 6  & 2  \\
  \hline
  Distr. 5     & 18 & 14 & 10 & 6  & 2  & 2  & 6  & 10 & 14 & 18 \\
  \hline
\end{tabular}
\end{table}

When applying a two-dimensional search, we need to fix the range of $r$. Recall that $r$ is the money paid by the registered SUs to the DO. When $B_R=60, \mu_0=1$, the highest revenue a SU can obtain is $B_R/\mu_0\times\theta^{10}=600$. So $r$ should be within $[0, 600]$. We assume the smallest change in $r$ is $r_0=1$ unit. We assume the total TV bandwidth can only be allocated in the unit of $b_0=6$ MHz (the same as the spectrum span of a TV channel in the US). So the possible values of $B_R$ are $\{0, 6, 12, \cdots, 60\}$.

For the database maintenance cost we set $\alpha=1.2$ in $\phi_0(b)=\epsilon_0\cdot b^{\alpha}$ and we vary the parameter $\epsilon_0$ in the range $[0, 7]$. If $\epsilon_0$ is too high, DO will not have the incentive to consider the registration scheme. Since the marginal cost of one database access is very small compared with the bandwidth reservation cost, we set the default value of $\epsilon_1$ to be 0.

In the case of non-strategic SUs, we assume $\gamma^i$s are the same among all types. We use a single value $\gamma$ to denote the fraction of SUs preferring the registration scheme. We will set $\gamma$ to be either $0.2$ or $0.5$ in our simulations.

In this section, for convenience we refer to the Complete Information Scenario as \emph{CIS} and the Incomplete Information Scenario as \emph{IIS}.

\subsection{Simulation Results}

\subsubsection{Impact of SU strategy}

In this section, we compare the two cases: non-strategic SUs in CIS and strategic SUs in CIS to show the impact of SU strategy on the pricing solution.

First, in Fig. \ref{fig:epsilon0-B_R_non-strategic}, we fixed the SU type distribution to be Distr. 1. We set $\epsilon_1$ to be either 0.05 or 0.1. We then vary the reservation cost $\epsilon_0$ and plot the optimal $B_R$ obtained from three cases: non-strategic SUs with $\gamma=0.2$, non-strategic SUs with $\gamma=0.5$ and strategic SUs, all under CIS. From Fig. \ref{fig:epsilon0-B_R_non-strategic}, we can see that with non-strategic SUs, the DO tends to reserve less bandwidth for the reservation scheme under the same $\epsilon_0$ and $\epsilon_1$. That is because when the SUs are non-strategic players, they decide whether to pay the registration fee only based on the prior knowledge of the number of SUs in the registration scheme which is predetermined. Therefore, many SUs will overestimate the registration fee, so that the DO cannot increase the registration fee to obtain higher profit. As a result, the DO prefers the service plan scheme and reserves less bandwidth.
\begin{figure}[t]
\centering
\subfloat[$\epsilon_1=0.05$.]
{\centering \includegraphics[width=1.6in]{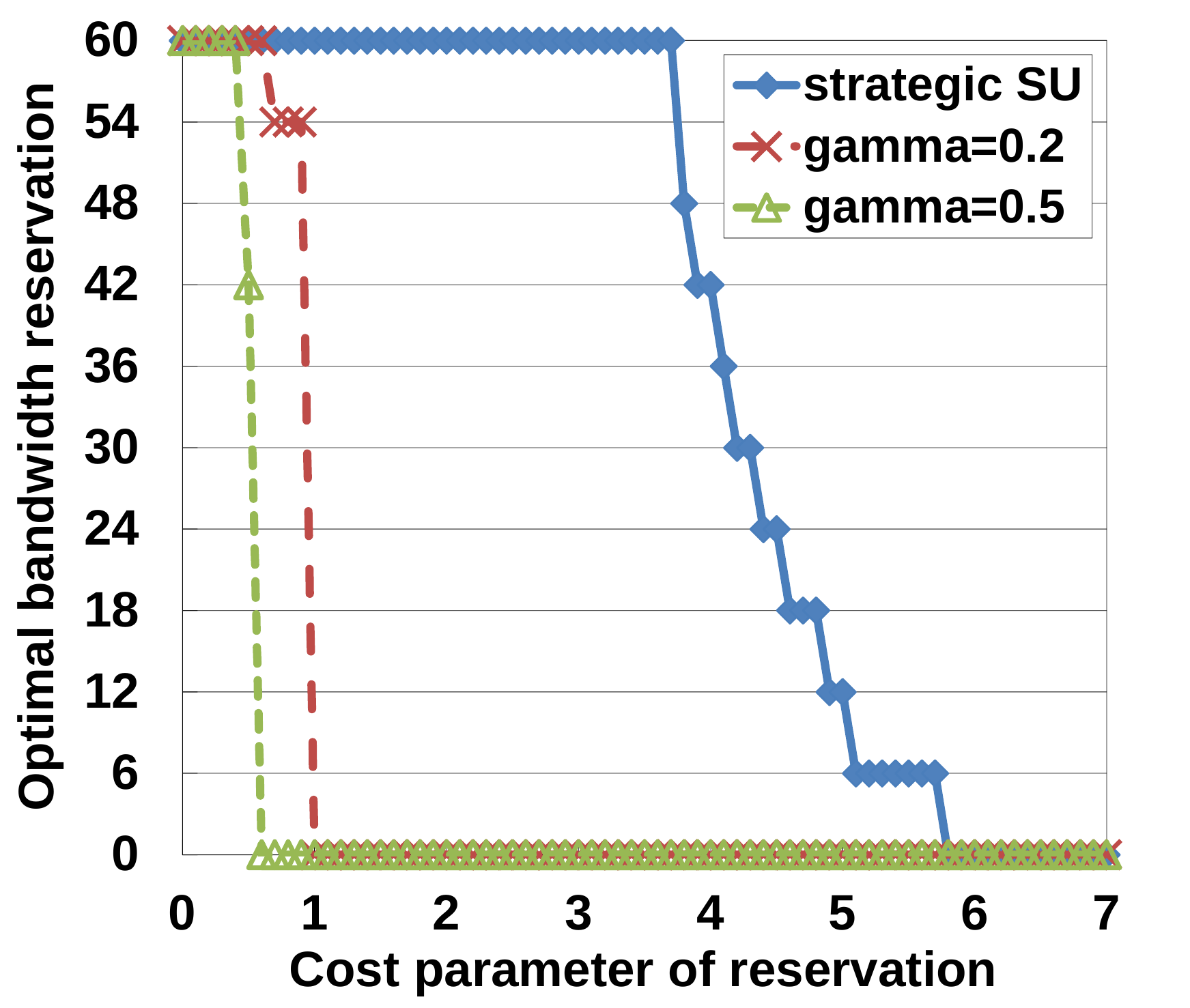}         
\label{fig:epsilon0-B_R_non-strategic_ep1=0.05}}%
\subfloat[$\epsilon_1=0.1$.]
{\centering\includegraphics[width=1.6in]{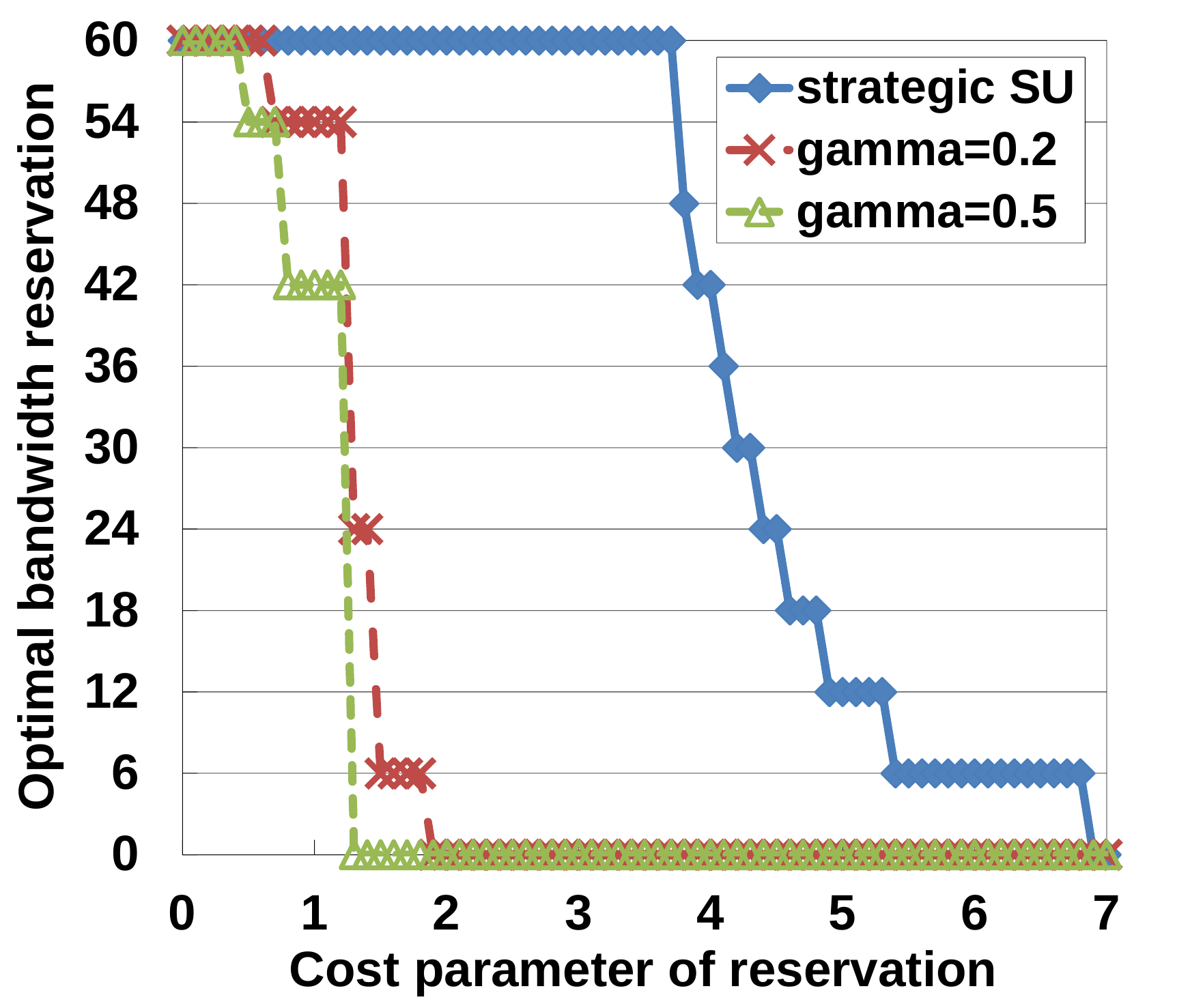}           
\label{fig:epsilon0-B_R_non-strategic_ep1=0.1}}%
\caption{Impact of SU strategy on bandwidth reservation.}
\label{fig:epsilon0-B_R_non-strategic}
\vspace{-0.0cm}
\end{figure}

Second, in Fig. \ref{fig:strategic_vs_non-strategic}, we fixed the SU type distribution to be Distr. 1 and set $\epsilon_1$ to be 0.05. We vary the reservation cost $\epsilon_0$ and plot the utility of DO and the average utility of SUs given the optimal pricing parameters. From Fig. \ref{fig:strategic_vs_non-strategic_UDO}, we can see that in both scenarios, the DO utility first decreases and then remains the same after $B_R=0$. When $\epsilon_0$ is higher, the DO utility in the non-strategic scenario is higher compared with that in the strategic SU scenario. That's because when $\epsilon_0$ is higher, in the strategic scenario, the profit DO makes from the registered SUs become smaller however, most of DO profit in the non-strategic SU scenario comes from the non-registered SUs.
We can also see from Fig. \ref{fig:strategic_vs_non-strategic_USU} that when SUs are strategically player, their average utility are still non-zero when $\epsilon_0>0.5$. That is because, in the non-strategic SU scenario, when $\epsilon_0>0.5$, $B_R=0$, all SUs can only get zero utility. However, in the strategic SU scenario, some SUs can choose the registered scheme to obtain non-zero utility.
From the comparison, we can conclude that with more intelligent SU strategy, SUs achieve higher average utility.
\begin{figure}[t]
\centering
\subfloat[DO utility.]{
\centering \includegraphics[width=1.6in]{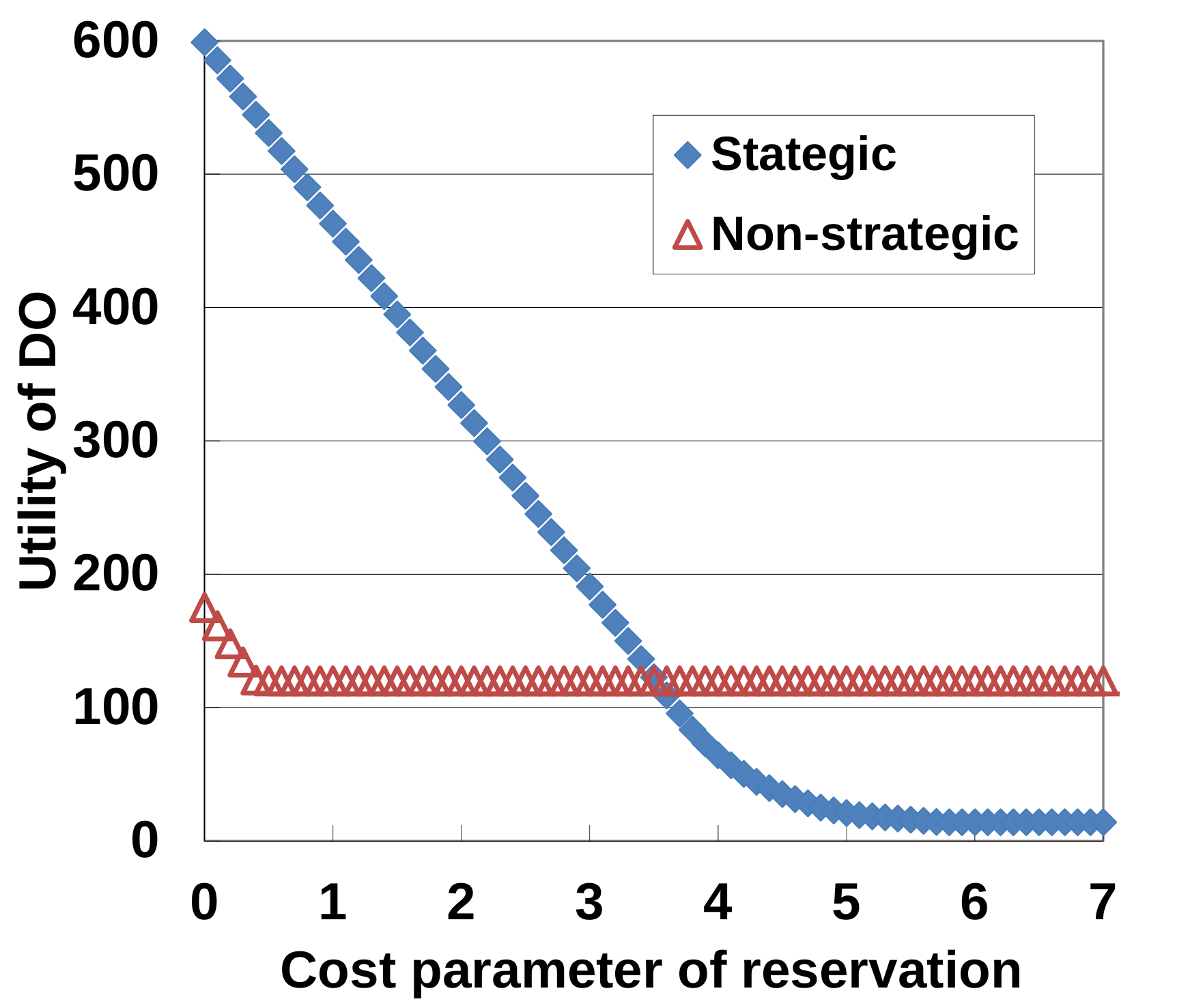}           
\label{fig:strategic_vs_non-strategic_UDO}}%
\subfloat[Average SU utility.]{
\centering\includegraphics[width=1.6in]{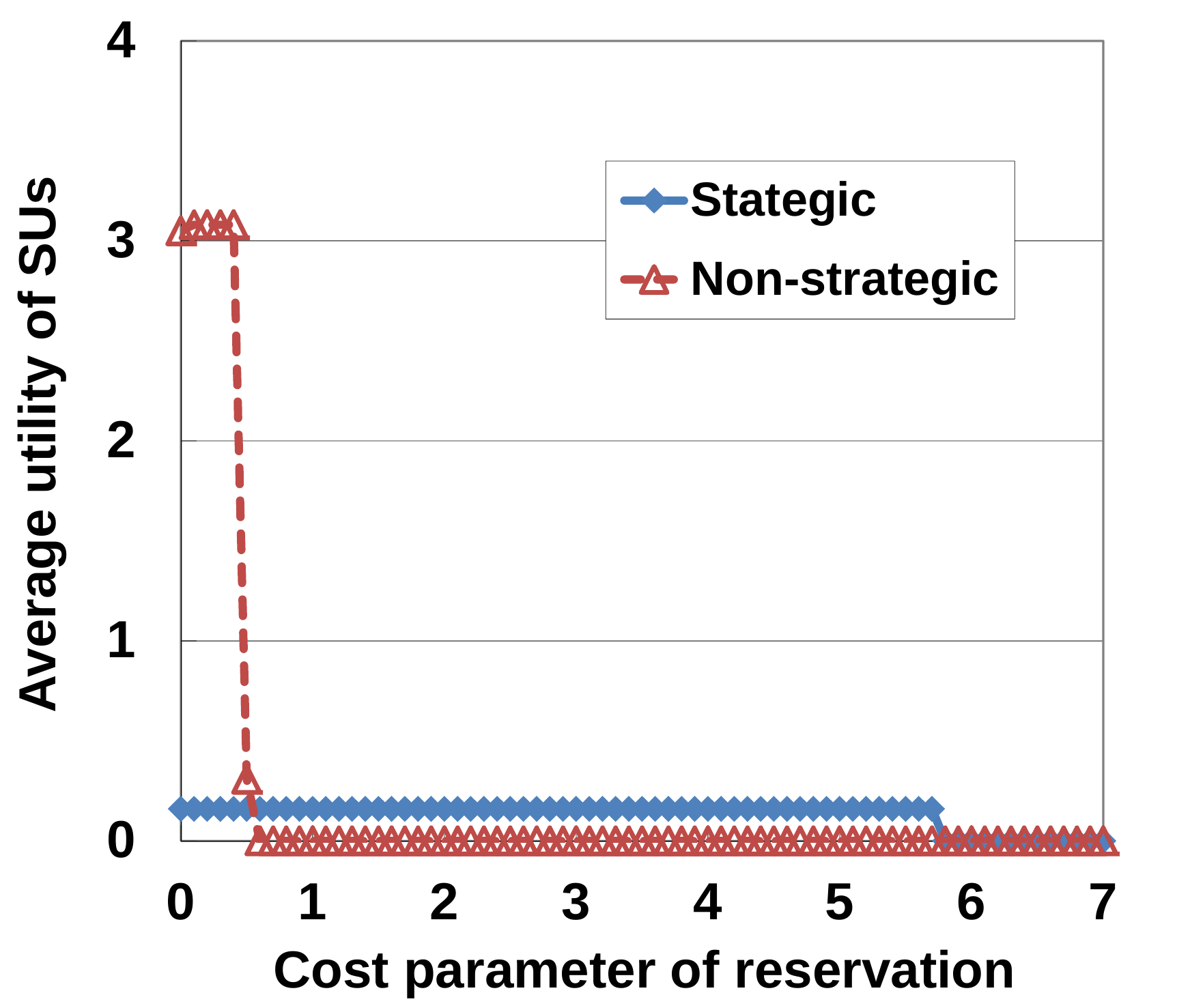}            
\label{fig:strategic_vs_non-strategic_USU}}%
\caption{Impact of SU strategy on utility.}
\label{fig:strategic_vs_non-strategic}
\vspace{-0.0cm}
\end{figure}

\subsubsection{Benefit of Hybrid Pricing Scheme}

We compared the DO's revenue when applying hybrid and uniform pricing schemes. There are two uniform pricing schemes: registration only or service plan only. We set $epsilon_0=3.0$. In the case of registration scheme only, $B_R=60$ while in the case of service plan scheme only, $B_R=0$. We show the maximum possible revenues the DO can obtain in type distributions 1-3 and random distribution in Fig. \ref{fig:hybrid}. The results from random distribution is an average of 100 runs. We can see that the hybrid pricing scheme provides higher revenue for the DO. By offering hybrid pricing schemes, the DO has a new degree of freedom to tune the bandwidth segmentation to increase its profit. We will evaluate the impact of $epsilon_0$ in the next evaluation.

\subsubsection{Impact of the bandwidth reservation cost}

To show the impact of the bandwidth reservation cost on the optimal bandwidth reservation
$B_R$, we vary $\epsilon_0$ from 0 to 5.2 and plot the optimal $B_R$ under each $\epsilon_0$ in both CIS and IIS in Fig. \ref{fig:epsilon0-B_R_scenario0} and Fig. \ref{fig:epsilon0-B_R_scenario1}, respectively. In both information scenarios, $B_R$ decreases with the increase of $\epsilon_0$. That is because with greater $\epsilon_0$, the reservation cost for the same bandwidth is higher, so the DO prefers to allocate more bandwidth for unregistered users. We can see that when $\epsilon_0\geq 5.2$, $B_R=0$ for all cases, which means no bandwidth is reserved for the registration scheme.

We can also observe from Fig. \ref{fig:epsilon0-B_R} that in Distr. 2 $B_R$ decreases the fastest among the three distributions. That is because in Distr. 2, there are more SUs of higher types, who are more likely to have $g_i>0$ in Eqn. (\ref{eqn:max_udo3}), thus DO can benefit more from the service plan scheme.

It is also worth noting that in the CIS, $B_R$ decreases to zero sooner compared with that in the IIS. That is because in CIS, DO can design the query plans for individual unregistered SUs to make more profit.

\begin{figure*}[tp]
\centering
\begin{minipage}{2.0in}
\centering
\includegraphics[width=1.9in]{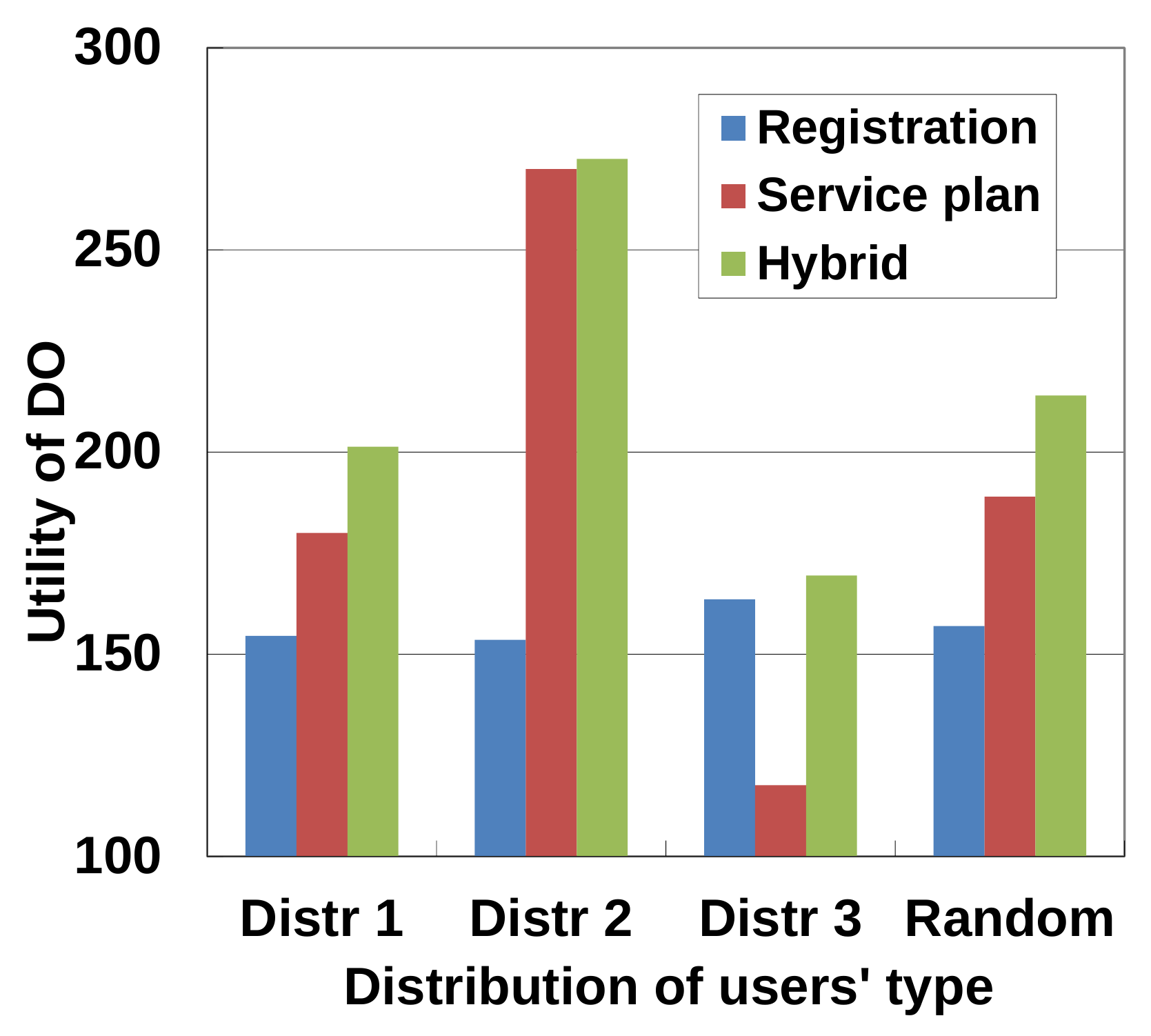}
\caption{Comparison between hybrid and uniform pricing.}
\label{fig:hybrid}
\end{minipage}
\hspace{1.2cm}
\begin{minipage}{4.0in}
\centering
\subfloat[Complete information.]{\centering
\centering \includegraphics[width=1.9in]{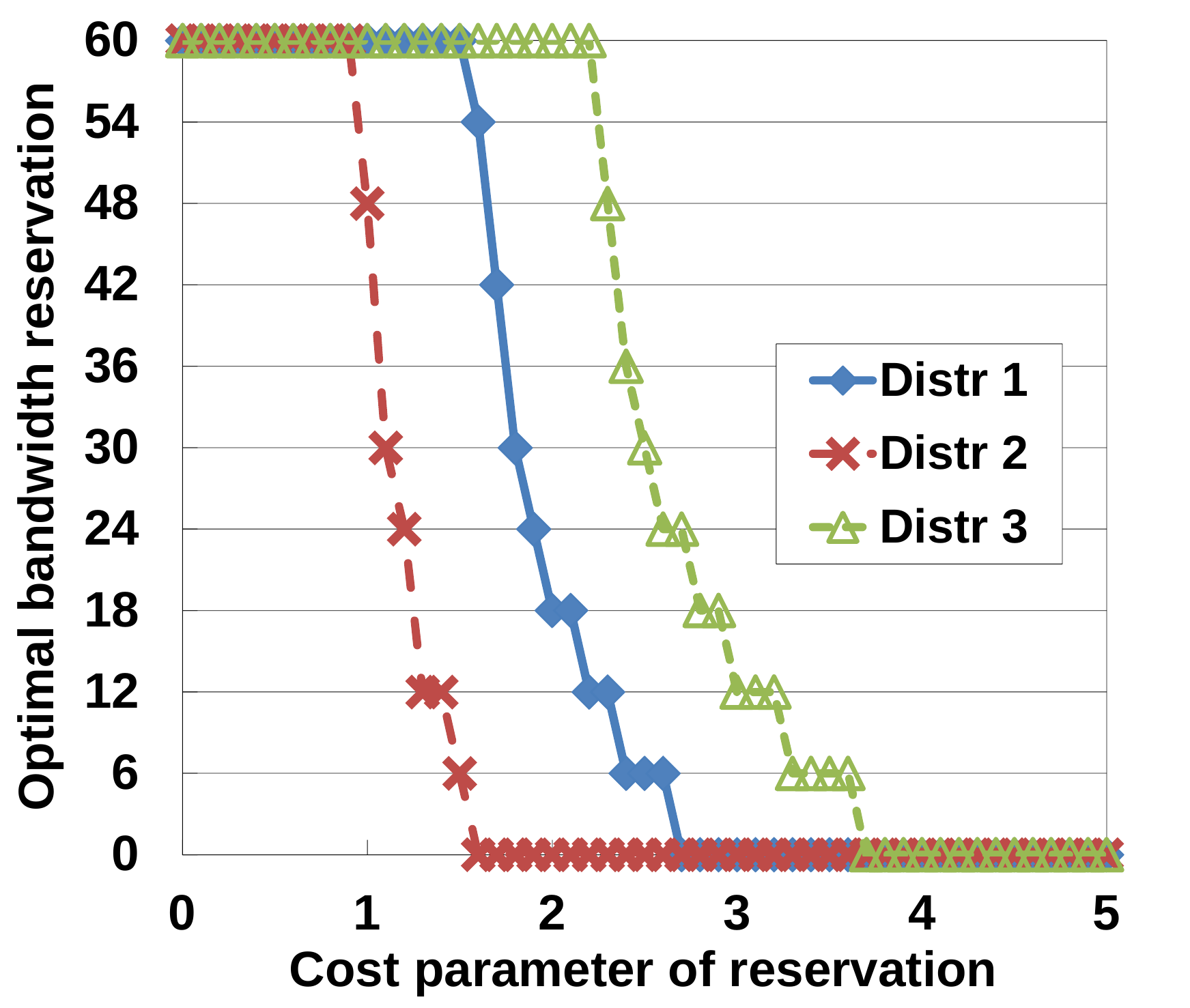}
\label{fig:epsilon0-B_R_scenario0}}%
\hspace{0.0cm}
\subfloat[Incomplete information.]{\centering
\centering\includegraphics[width=1.9in]{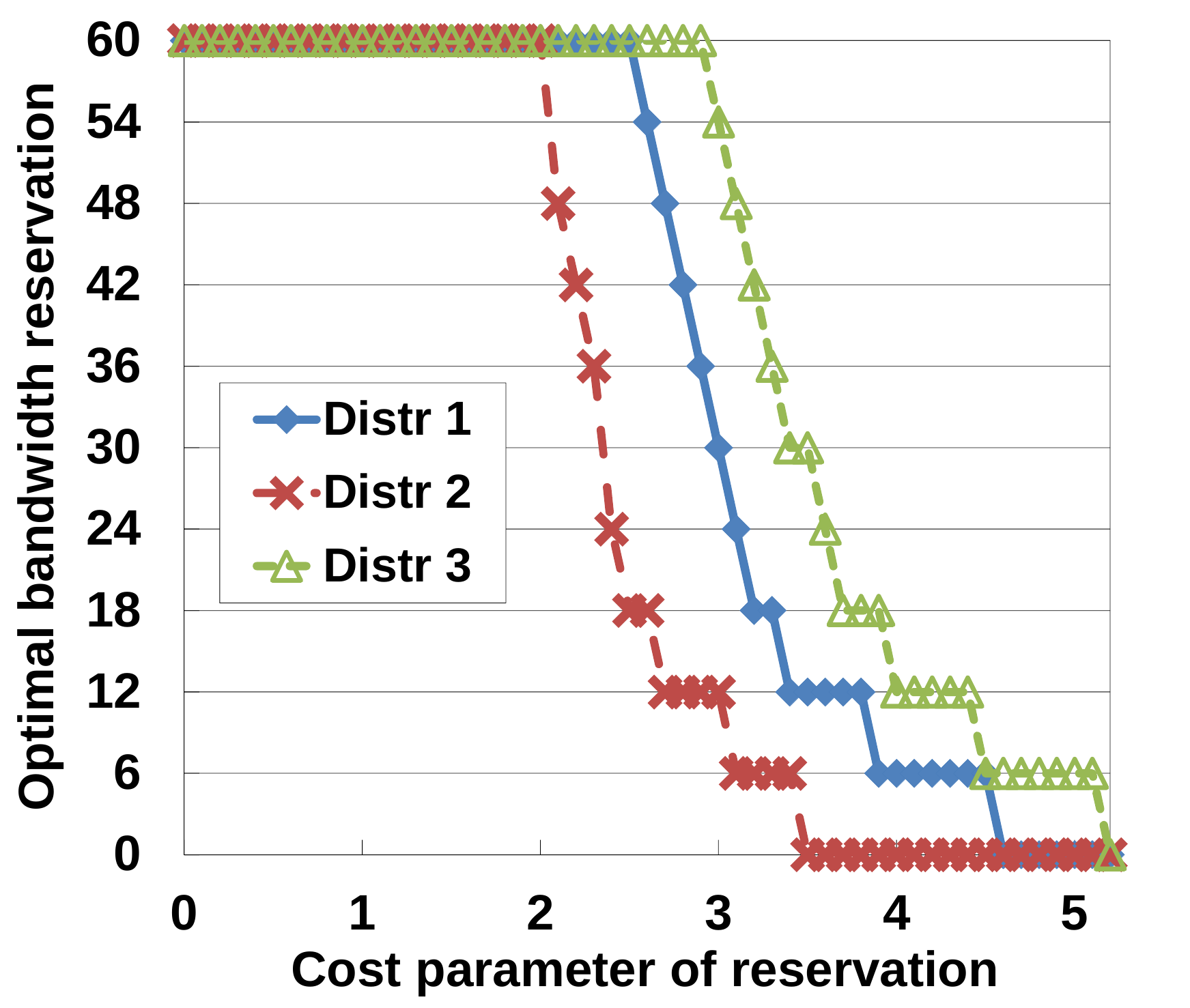}
\label{fig:epsilon0-B_R_scenario1}}%
\caption{Impact of bandwidth reservation cost.}
\label{fig:epsilon0-B_R}
\vspace{-0.0cm}
\end{minipage}
\end{figure*}

\subsubsection{Impact of the DO's knowledge of SUs' personal information}

In Fig. \ref{fig:information}, we show the impact of the DO's knowledge of the SUs' personal information on the utility of the DO and the SUs. The SUs' types are set to distribution 1.

We can see from Fig. \ref{fig:information_UDO} that the DO has higher utility in the CIS than that in the IIS, when $\epsilon_0\geq 1.7$. That is because when $\epsilon_0\geq 1.7$, $B_R<60$ in CIS. As a result, the bandwidth for the service plan scheme is non-zero and the DO can make more revenue from unregistered SUs in the CIS.

We can see from Fig. \ref{fig:information_USU} that the SUs have higher average utility in the IIS than that in CIS, when $\epsilon_0>2.7$. That' because when $\epsilon_0>2.7$, $B_R<60$ in IIS. As a result, the bandwidth for the service plan scheme is non-zero and the SUs can get non-negative utility when choosing the service plan scheme.

We can also notice that when $\epsilon_0<1.7$ in Fig. \ref{fig:information_UDO} and when $\epsilon_0<2.7$ in Fig. \ref{fig:information_USU}, $B_R=60$, which means all bandwidth is reserved for registered SUs. As the DO design registration scheme to admit only registered SUs with the highest type, the utilities obtain in both information scenario are the same.

In summary, with more knowledge of the SUs' information, the DO enjoys higher utility. On the other hand, the hidden information provides the SUs higher utility.

\begin{figure}[t]
\centering
\subfloat[DO utility.]{\centering
\centering \includegraphics[width=1.6in]{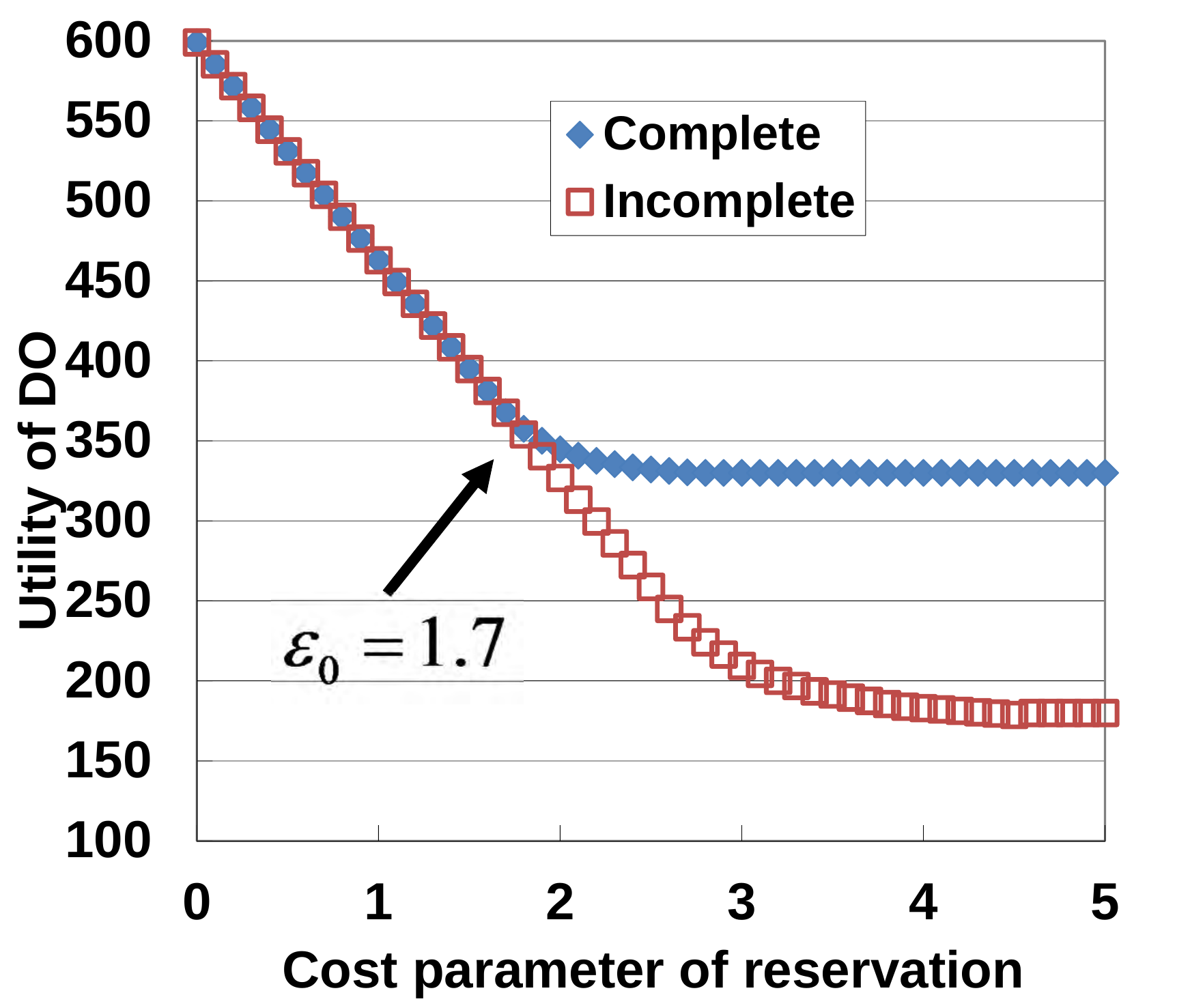}          
\label{fig:information_UDO}}%
\subfloat[Average SU utility.]{\centering
\centering\includegraphics[width=1.6in]{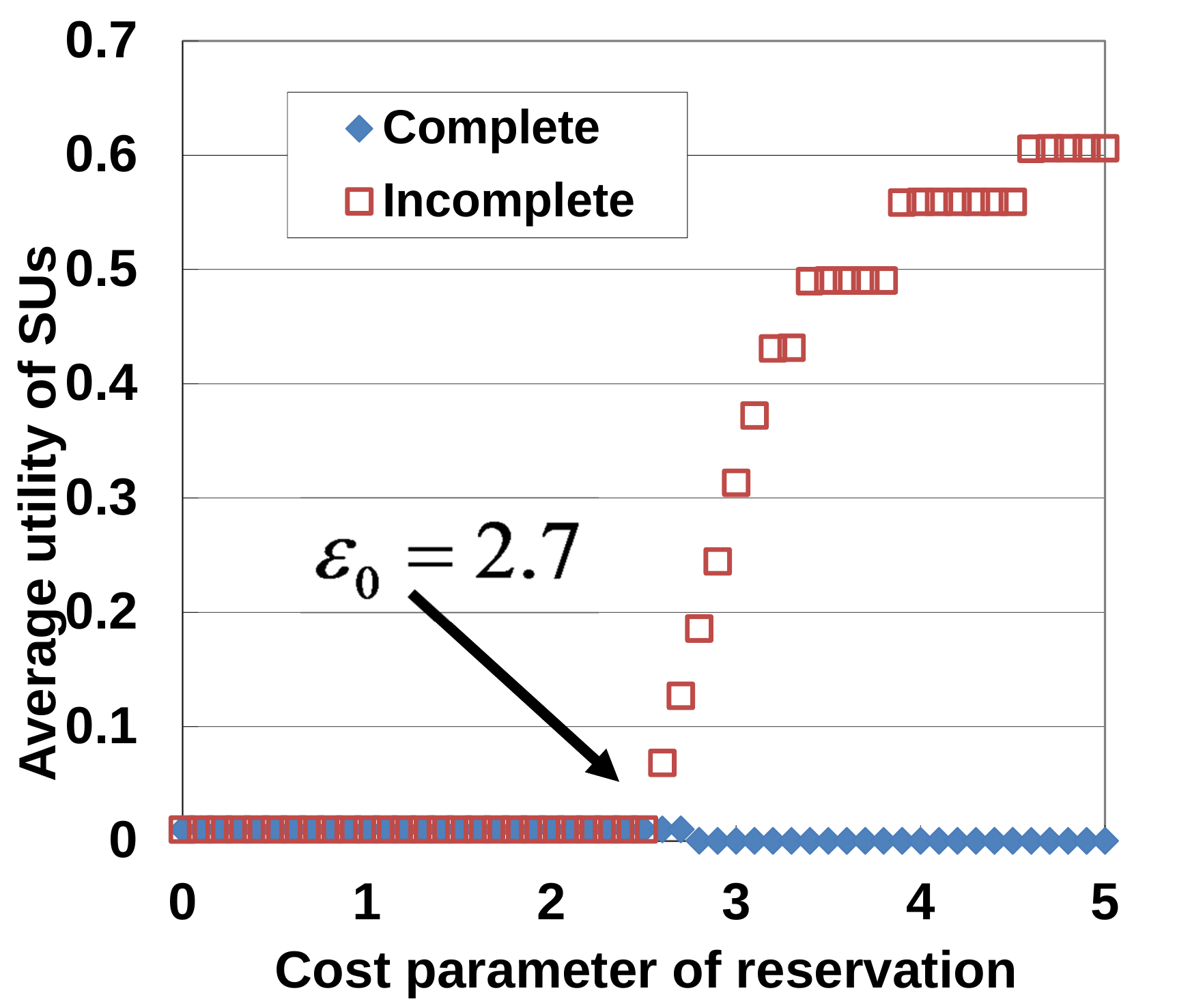}           
\label{fig:information_USU}}%
\caption{Impact of information.}
\label{fig:information}
\vspace{-0.0cm}
\end{figure}

\subsubsection{Impact of the SUs' type distribution}

To show the impact of the SUs' type distribution on the contract design, we fixed $B_R=30$, $r=200$, $\epsilon_0=3.0$ and compute the $q_i$ and $p_i$ in the service plans for 50 unregistered SUs in the 5 different SU type distributions. Note that the parameters may not be the optimal pricing parameters for each distribution, the use of the same parameters for all the distributions allows us to see the impact of SU type distribution separately. Fig. \ref{fig:contract} shows the results for $q_i$ and $p_i$. In Fig. \ref{fig:contract_query}, we can see that when the number of SUs of lower types is higher, (compared distr. 3 and distr. 1), DO tends to assign non-zero queries for lower type SUs. That is because a query $q_i$ is determined by a factor of $\sum_{j=i+1}^TN\beta^j$ in Eqn. (\ref{eqn:max_udo3}). In Fig. \ref{fig:contract_price} we can see that if non-zero $q_i$ is assigned to a lower type, a lower price $p_i$ should be charged. This is because, the price should follow the individual rationality constraint (Eqn. (\ref{eqn:IR})) for such SUs.

\begin{figure}[t]
\centering
\subfloat[Query in the service plans.]{\centering
\centering \includegraphics[width=1.6in]{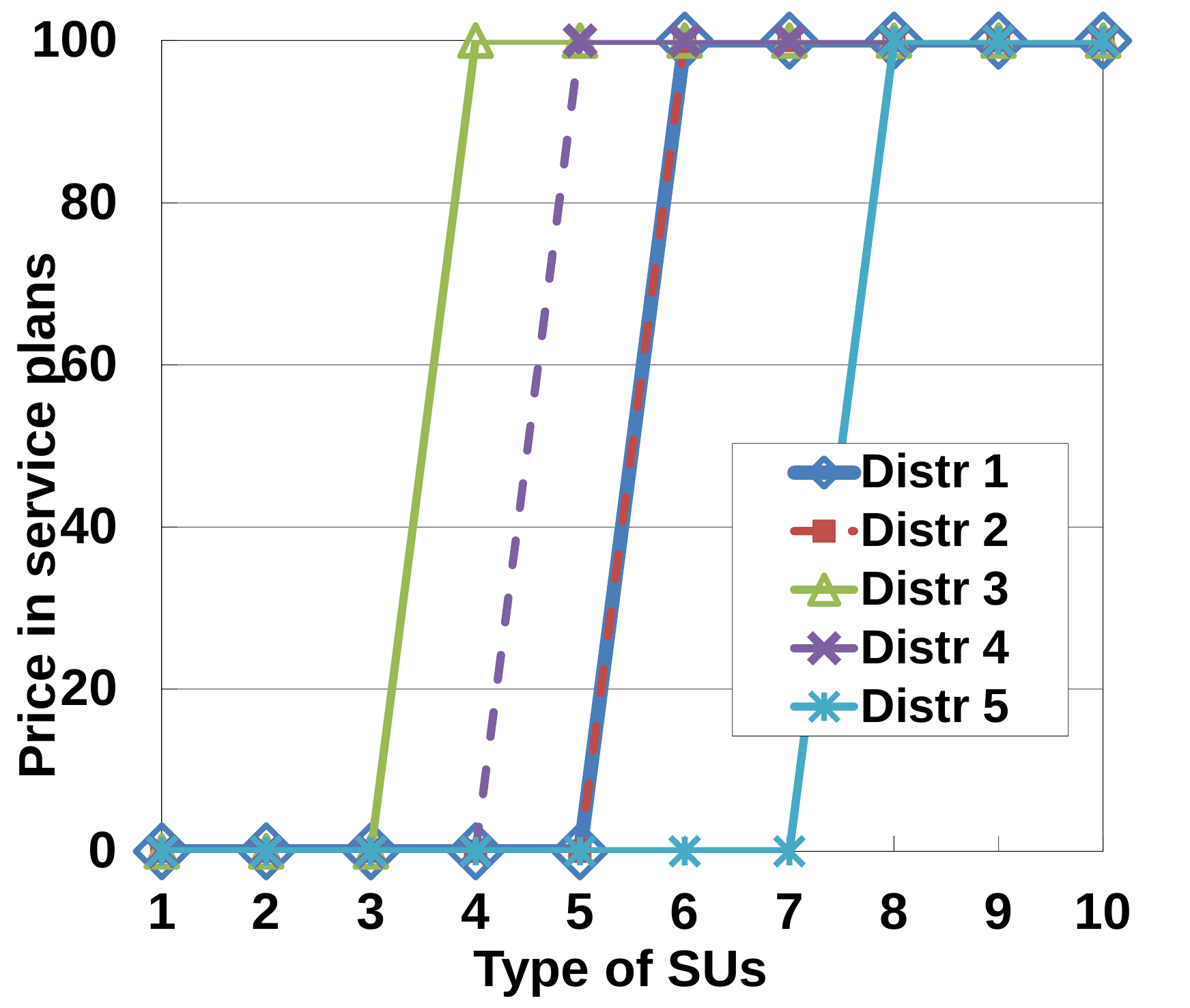}           
\label{fig:contract_query}}%
\subfloat[Price in the service plans.]{\centering
\centering\includegraphics[width=1.6in]{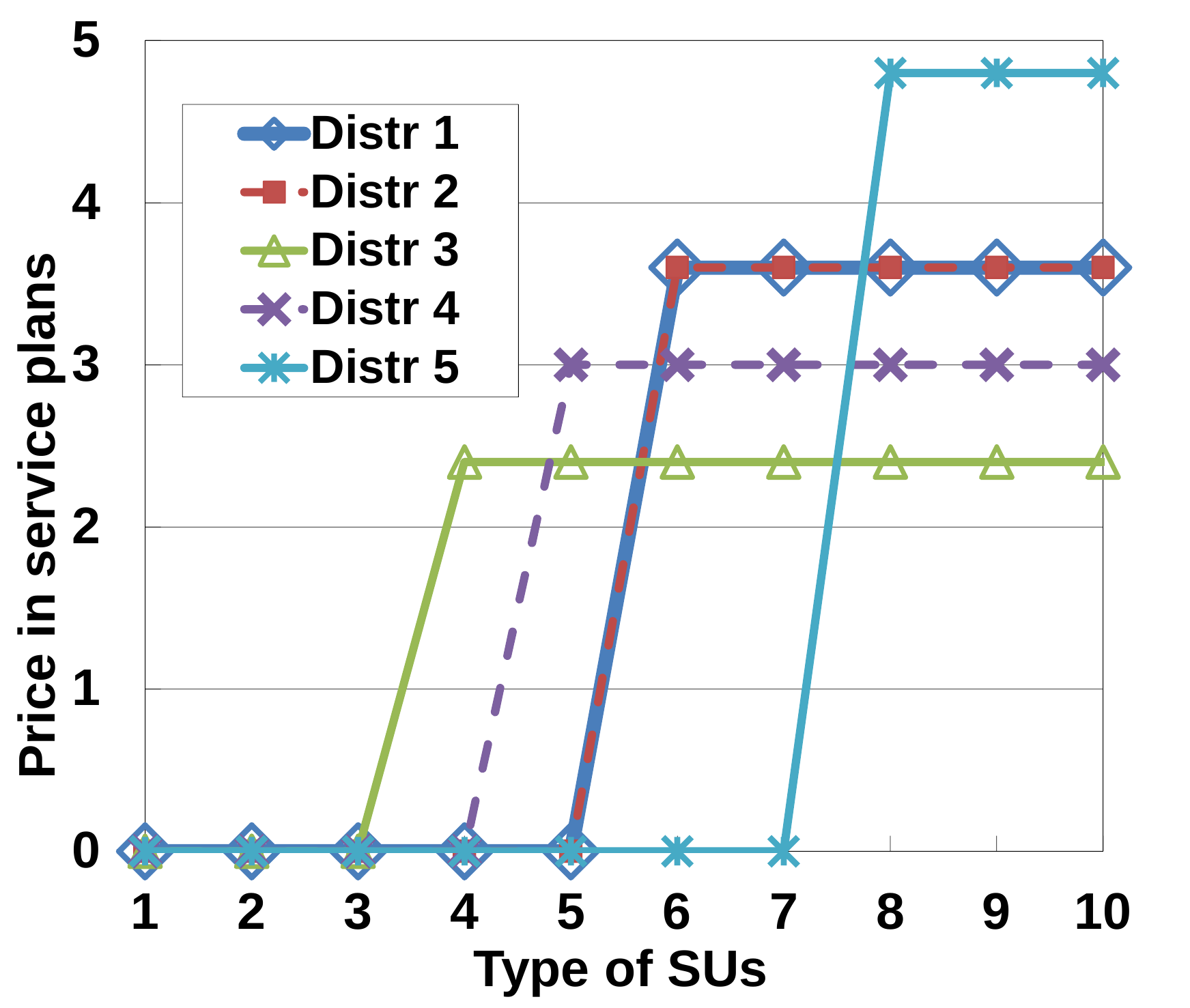}            
\label{fig:contract_price}}%
\caption{Query plans under different SU type distributions.}
\label{fig:contract}
\vspace{-0.0cm}
\end{figure}

\subsubsection{Impact of the suboptimal query assignment}

We implemented the brunching and ironing algorithm \cite{contract_book} for the contract design. We assume at stage I, the DO and the SUs still use Algorithm \ref{alg:valid_q} to estimate the contracts to ensure the existence of an NE. We generate random SU type distributions with $N=100$, $T=5-30$ and set $B_R=0$, which indicates that all the $N$ SUs choose the service plan scheme. We repeated the test for 100 times. We show the average utility of the DO obtained from query assignment algorithms in Fig. \ref{fig:suboptimal}. For a different $T$, Algorithm \ref{alg:valid_q} achieves the DO utility within 90\% of that obtained from the optimal algorithm.

\subsubsection{Convergence time to an NE under incomplete information scenario}

To check the convergence time to NE via Algorithm \ref{alg:registration_incomplete_info}, we fix $B_R=30$, $r=200$ and generate $N=100-1000$ SUs with uniformly distributed types. Under each $N$, we repeat Algorithm \ref{alg:registration_incomplete_info} 200 times. Fig. \ref{fig:convergence} shows the average steps of improvement needed to achieve an NE. It is clear that the number of steps is bounded by $N\cdot(N+1)$, which verifies Theorem \ref{thm:NDRG_property2}. We may prove a tighter bound for the convergence time in our future work.

\begin{figure}[tp]
\centering
\begin{minipage}{1.6in}                                     
\centering
\includegraphics[width=1.6in]{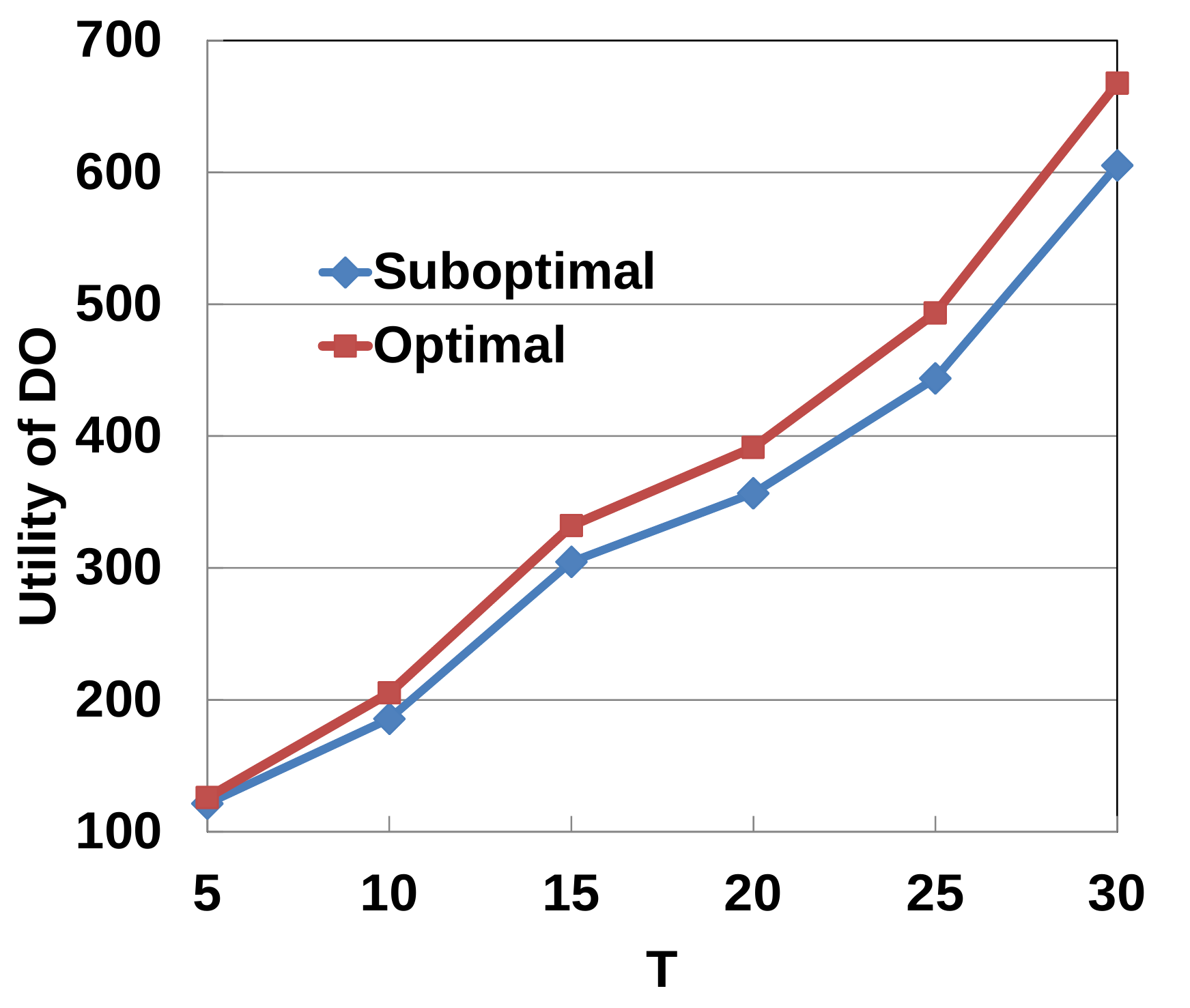}          
\caption{Impact of suboptimal query assignment.}
\label{fig:suboptimal}
\end{minipage}
\begin{minipage}{1.6in}                                     
\centering
\includegraphics[width=1.6in]{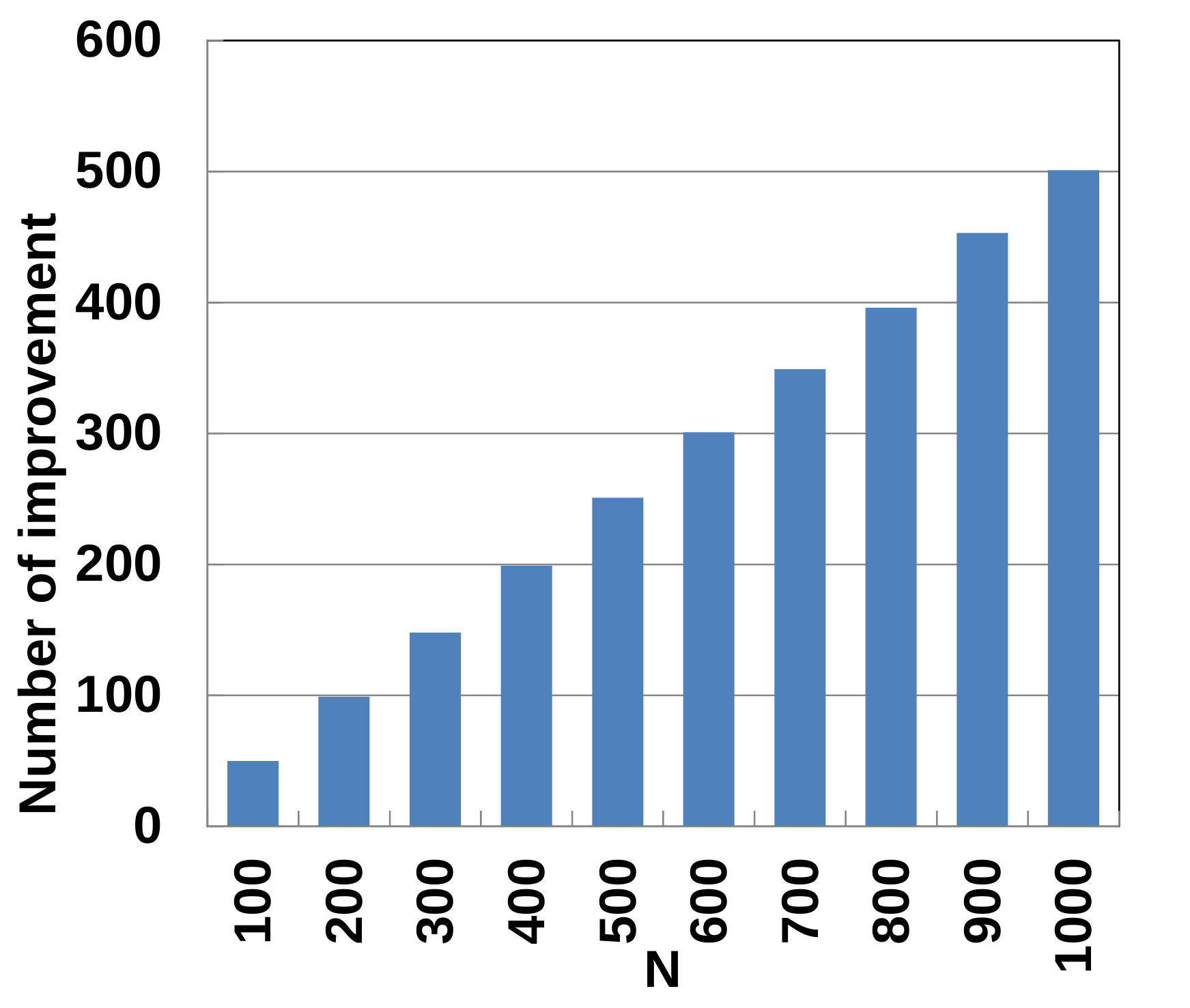}         
\caption{Convergence time to an NE under IIS.}
\label{fig:convergence}
\end{minipage}
\vspace{-0.0cm}
\end{figure}

\section{Discussion and Future Works}

In this paper, to make the problem solvable, we have made several simplifications in the system model. There are several interesting ways to further extend this work.

First, the registration scheme considered in this paper admits a uniform registration fee. In a more flexible setting, the registered SUs can pay different amounts of registration fees and enjoy different shares of the reserved bandwidth. Similarly with the query plan scheme, the DO can offer several levels of registration fees.

Second, the pricing framework considered in this paper is not dynamic. The pricing parameters are determined at the beginning once and will not change. In a more dynamic setting, the DO can redesign new pricing parameters for the following time durations and the SUs can also re-select the pricing scheme given the observation in previous time durations and the newly announced pricing parameter. In this case, the interactions between the DO and the SUs become a repeated game. Also, the SUs may dynamically join and leave the query plan. Therefore, the SUs should consider the expectation of future SU behaviors when choosing their pricing schemes.

\section{Related Works}
\label{sec:related_works}

Most existing works on geo-location databases can be classified into two categories. Some works focus on the design of geo-location database to protect primary users. In \cite{database_dyspan08}, Gurney et al. discussed the methods to calculate the protection area for TV stations. In \cite{database_dyspan11}, Murty et al. designed a database-driven white space network based on measurement studies and terrain data. Some other works focused on the networking issue with the assumption that the database is already set up. In \cite{database}, Feng et al. presented a white space system utilizing a database. In \cite{database_icdcs12}, Chen et al. considered the channel selection and access point association problem.
One recent work \cite{database_icc12} also address the business model related to the geo-location database. In \cite{database_icc12}, the authors proposed that the geo-location database acts as a spectrum broker reserving the spectrum from spectrum licensees. They considered only one pricing scheme which is similar to the registration scheme discussed in our paper.
Compared to our previous work \cite{databasePricing}, in this paper, we further extend the scenario to non-strategic SUs and compared the pricing schemes with non-strategic and strategic SUs under complete information scenario. We also extend our theoretical analysis and numerical evaluations.

Many works also focus on the economic issue of dynamic spectrum sharing. In \cite{add1}, the pricing-based spectrum access control is investigated under secondary users competitions. In \cite{add2}, spectrum pricing with spatial reuse is considered. Contract theory is utilized in the scenarios where the spectrum buyers have hidden information. In \cite{JSAC10}, Gao et al. leveraged contract theory to analyze the spectrum trading between primary operator and SUs. In \cite{dyspan11Duan}, contract theory is applied to the cooperative communication scenario. In this paper, we also model the service plan design with contract theory. However, due to the co-existence of hybrid pricing schemes, there is uncertainty about the number of SUs choosing the contract items, which is different from existing works.

There are some works focus on the hybrid pricing of other limited resources. In \cite{icdcs12segmentation}. Wang et.al study the problem of capacity segmentation for two different pricing schemes for cloud service providers. One key difference between our work and \cite{icdcs12segmentation} is that the strategic SUs considered in our paper can dynamically choose between pricing schemes. While in \cite{icdcs12segmentation}, the users are pre-categorized into different pricing scheme before designing the pricing schemes.

\section{Conclusions}
\label{sec:conclusion}

In this paper, we consider a hybrid pricing model for TVWS database. The SUs can choose between the registration and the service plan scheme. We investigate scenarios where the SUs can be either non-strategic or strategic players and the DO has either complete and incomplete information of the SUs. In the non-strategic SU scenario, we model the competitions among the SUs as non-cooperative game and prove the existence of an NE in two different scenarios by showing that the game is an unweighted congestion game. We model the pricing for unregistered SUs with contract theory and derive suboptimal query plans for different types of SUs. Based on the SUs' pricing scheme choices, the DO optimally determines the bandwidth segmentation and pricing parameters to maximize its profit. We have conducted extensive simulations to obtain numerical results and verify our theoretical claims.

\section*{Acknowledgement}
The research was support in part by grants from 973 project 2013CB329006, China NSFC under Grant 61173156, RGC under the contracts CERG 622410, 622613, HKUST6/CRF/12R, and M-HKUST609/13, as well as the grant from Huawei-HKUST joint lab.

\end{document}